\documentclass[a4paper,11pt]{article}

\usepackage{microtype}
\usepackage[a4paper,margin=2.5cm]{geometry}
\usepackage[numbers,sort&compress]{natbib}
\usepackage[table]{xcolor}
\usepackage{amsmath}
\usepackage{amssymb}
\usepackage{mathrsfs}
\usepackage{amsthm}
\usepackage{enumitem}
\usepackage{graphicx}
\usepackage[small]{caption}
\usepackage{booktabs}
\usepackage{tikz-cd}
\usetikzlibrary{arrows.meta,backgrounds}
\usepackage{hyperref}
\usepackage{orcidlink}
\usepackage{doi}
\usepackage[textsize=tiny]{todonotes}
\usepackage[framemethod=tikz]{mdframed}
\usepackage{needspace}
\usepackage{mathtools}
\usepackage{xspace}
\usepackage{multirow}
\usepackage{array}
\usepackage{aliascnt}

\theoremstyle{plain}
\newtheorem{theorem}{Theorem}[section]
\newaliascnt{lemma}{theorem}
\newtheorem{lemma}[lemma]{Lemma}
\aliascntresetthe{lemma}
\newaliascnt{corollary}{theorem}
\newtheorem{corollary}[corollary]{Corollary}
\aliascntresetthe{corollary}

\theoremstyle{definition}
\newaliascnt{definition}{theorem}
\newtheorem{definition}[definition]{Definition}
\aliascntresetthe{definition}

\theoremstyle{remark}
\newtheorem*{remark}{Remark}

\usepackage[capitalize,nameinlink]{cleveref}
\crefname{lemma}{Lemma}{Lemmas}
\Crefname{lemma}{Lemma}{Lemmas}
\crefname{corollary}{Corollary}{Corollaries}
\Crefname{corollary}{Corollary}{Corollaries}
\crefname{definition}{Definition}{Definitions}
\Crefname{definition}{Definition}{Definitions}

\newenvironment{mycover}
{\list{}{\listparindent 0pt
        \itemindent    \listparindent
        \leftmargin    1cm
        \rightmargin   1cm
        \parsep        0pt}%
    \raggedright
    \item\relax}
{\endlist}

\newcommand{\myemail}[1]{\,$\cdot$\, {\small #1}}
\newcommand{\myaff}[1]{\,$\cdot$\, {\small #1}\par\smallskip}
\newcommand{\LOCAL}{\ensuremath{\mathsf{LOCAL}}\xspace}
\newcommand{\CONGEST}{\ensuremath{\mathsf{CONGEST}}\xspace}

\definecolor{darkgreen}{rgb}{0,0.5,0}
\definecolor{darkred}{rgb}{0.4,0,0}
\definecolor{darkblue}{rgb}{0,0.1,0.5}

\definecolor{figBlue}{HTML}{24689A}
\definecolor{figTeal}{HTML}{16816A}
\definecolor{figOrange}{HTML}{BB681C}
\definecolor{figPurple}{HTML}{8A639F}
\definecolor{figSky}{HTML}{62A4C5}
\definecolor{figVermilion}{HTML}{C66B58}
\definecolor{figGray}{HTML}{77828C}

\hypersetup{
    colorlinks=true,
    linkcolor=darkred,
    citecolor=darkgreen,
    filecolor=black,
    urlcolor=[rgb]{0,0.1,0.5},
    pdftitle={Distributed (Delta+1)-Coloring in O(sqrt(Delta) + log* n) Rounds},
    pdfauthor={Marc Fuchs, Fabian Kuhn}
}

\begin{document}

\hypersetup{pageanchor=false}
\setcounter{page}{0}
\thispagestyle{empty}

\begin{mycover}
    {\LARGE\bfseries\boldmath
    Distributed $(\Delta+1)$-Coloring in\\
    $O(\sqrt{\Delta}+\log^* n)$ Rounds\par}
    \bigskip
    \bigskip
    \bigskip

    \textbf{Marc Fuchs}\,\orcidlink{0000-0003-2272-4483}
    \myemail{marc.fuchs@cs.uni-freiburg.de}
    \myaff{University of Freiburg}

    \textbf{Fabian Kuhn}\,\orcidlink{0000-0002-1025-5037}
    \myemail{kuhn@cs.uni-freiburg.de}
    \myaff{University of Freiburg}

\end{mycover}

\bigskip

\begin{abstract}
We give a deterministic distributed algorithm for $(\Delta+1)$-coloring
an $n$-node graph of maximum degree $\Delta$ in
$O(\sqrt\Delta+\log^* n)$ rounds in both the \LOCAL and \CONGEST
models. More generally, $(\deg+1)$-list coloring with lists from a
color space of size $C\geq\Delta+1$ can be solved in
$O(\sqrt\Delta\log(1+C/\Delta)+\log^* n)$ rounds; in \CONGEST, we
assume $C\leq n^{O(1)}$.
The algorithm combines Barenboim's polynomial coloring procedure
with arbdefective decompositions, degree reduction, and a short family
of hash functions. Using the polynomial procedure in the
degree-reduction framework already gives an
$O(\sqrt\Delta\log\Delta+\log^* n)$-round algorithm when
$C=O(\Delta)$. Hashing adapts the color space to the current degree
and removes the logarithmic factor. We prove the existence of the
hash family by the probabilistic method and compute it by deterministic
exhaustive search, using unrestricted local computation. We also give
an explicit variant based on lossless expanders that avoids this search
and achieves
$O(\sqrt\Delta+\log^* n)$ rounds for $C=O(\Delta)$.
\end{abstract}

\clearpage
\setcounter{page}{0}
\thispagestyle{empty}

\setcounter{tocdepth}{2}
\hypersetup{bookmarksdepth=3}
\begingroup
\small
\tableofcontents
\endgroup
\clearpage
\hypersetup{pageanchor=true}
\setlength{\parskip}{0.75pt plus 1pt}

\section{Introduction and Related Work}
\label{sec:introduction}

Distributed graph coloring is one of the core problems in the area of
distributed graph algorithms and, more generally, in distributed computing.
In the standard setting, the network is modeled as an $n$-node graph
$G=(V,E)$, and the objective is to compute a proper coloring of the nodes
of $G$. If the maximum degree of $G$ is $\Delta$, the most common task is
to color $G$ with $\Delta+1$ colors, matching the guarantee of a simple
sequential greedy algorithm.

The problem is usually studied in a synchronous message-passing model~\cite{Linial1992,Peleg2000}.
Each node has a unique $O(\log n)$-bit identifier and initially knows only
its own identifier, its incident communication links, its local input, and
any explicitly assumed global parameters. In each round, every node can
send a message to each of its neighbors, receive their messages, and
perform local computations.
At the end, each node must know its own color. The time complexity is the
number of rounds until all nodes terminate. Without a restriction on
message sizes, this is the \LOCAL model; restricting each message to
$O(\log n)$ bits gives the \CONGEST model. In both models, local
computation is unrestricted.

There are two natural ways to study the complexity of distributed
coloring: as a function of the number $n$ of nodes and as a function of
the maximum degree $\Delta$. When considering the complexity as a function
of $\Delta$, the dependence on $n$ is restricted to an additive
$O(\log^* n)$ term. This dependence on $n$ is optimal even for
3-coloring a cycle (i.e., even for $\Delta=2$), as was shown by
Linial~\cite{Linial1992} for deterministic algorithms and by
Naor~\cite{Naor1991} for randomized algorithms. Concentrating on these two
complexity regimes is motivated, in particular, by the gap theorem of
Chang, Kopelowitz, and Pettie~\cite{ChangKopelowitzPettie2019}.
For locally checkable labeling (LCL) problems on bounded-degree graphs in
the \LOCAL model, every deterministic $o(\log n)$-round algorithm can be
sped up to run in $O(\log^* n)$ rounds. Moreover, a randomized
$o(\log\log n)$-round algorithm that succeeds with high probability implies
a deterministic $O(\log^* n)$-round algorithm. Thus, for problems such as
$(\Delta+1)$-coloring, which belong to this fast class for every constant
$\Delta$, the dependence on $\Delta$ is a central part of the complexity
question. In this paper, we focus on the dependence on $\Delta$.

In this context, we also consider the more general
\emph{$(\deg+1)$-list-coloring problem}, where each node $v$ is given a list
$L_v$ of at least $\deg_G(v)+1$ available colors as input and has to choose
its color from $L_v$ so that the resulting coloring is proper. This problem
can also be solved by a simple sequential greedy algorithm. Moreover, it
arises naturally when computing
a $(\Delta+1)$-coloring in stages, where in each stage some uncolored nodes
become colored. When extending a partial coloring, removing the colors of
already colored neighbors leaves every uncolored node $v$ with a list of
still available colors whose size is at least the number of its uncolored
neighbors plus one.

Before discussing the complexity as a function of $\Delta$, we briefly
review what is known as a function of $n$. For a long time, the best known
deterministic and randomized bounds for $(\Delta+1)$-coloring in \LOCAL were
$2^{O(\sqrt{\log n})}$ and $O(\log n)$, respectively. The deterministic
bound was based on the network decomposition algorithm of Panconesi and
Srinivasan~\cite{PanconesiSrinivasan1996}, and the randomized bound follows
from repeated randomized color trials~\cite{Johansson1999} or from a
reduction to Luby's maximal independent set (MIS)
algorithm~\cite{Luby1986}. Improving these bounds was a longstanding open
problem. The first $\operatorname{poly}(\log n)$-round deterministic
algorithm was given by Rozho\v{n} and Ghaffari~\cite{RozhonGhaffari2020} as a
consequence of their breakthrough work on deterministic network
decompositions. The first randomized $o(\log n)$-round algorithm was
obtained by Harris, Schneider, and Su~\cite{HarrisSchneiderSu2016}.

The best known deterministic and randomized upper bounds
for $(\Delta+1)$-coloring in \LOCAL are
$\widetilde O(\log^{5/3} n)$ and
$\widetilde O(\log^{5/3}\log n)$, respectively. Here and throughout,
$\widetilde O(x)$ hides factors at most polylogarithmic in $x$.
The deterministic bound follows from the MIS algorithm of Ghaffari and
Grunau~\cite{GhaffariGrunau2024}. The randomized bound combines this result
with the shattering algorithm of Chang, Li, and
Pettie~\cite{ChangLiPettie2018}. In \CONGEST, the corresponding bounds are
$O(\log^2 \Delta\cdot\log n)$~\cite{GhaffariKuhn2021} and
$O(\log^3\log n)$~\cite{HalldorssonNolinTonoyan2022}, using deterministic
local rounding and communication-efficient randomized coloring,
respectively. The $n$-dependent \LOCAL bounds also extend to the more general
$(\deg+1)$-list-coloring problem. The deterministic bound already holds for
this problem~\cite{GhaffariGrunau2024}, and the randomized bound follows by
combining it with the algorithm of Halld\'orsson, Kuhn, Nolin, and
Tonoyan~\cite{HalldorssonKuhnNolinTonoyan2022}. In \CONGEST, the
deterministic $O(\log^2\Delta\cdot\log n)$ bound extends to lists from a
color space of size at most $\operatorname{poly}(\Delta)$~\cite{GhaffariKuhn2021}.
For a color space of size at most $\operatorname{poly}(n)$, the randomized
algorithm of Halld\'orsson, Nolin, and Tonoyan~\cite{HalldorssonNolinTonoyan2022},
combined with the network decomposition of Ghaffari et al.~\cite{GhaffariGrunauHaeuplerIlchiRozhon2023},
solves $(\deg+1)$-list coloring in $\widetilde O(\log^3\log n)$ rounds. These randomized bounds hold with high probability.

The development of deterministic algorithms with an optimal $O(\log^* n)$ dependence on $n$
started with Linial's $O(\log^* n)$-round $O(\Delta^2)$-coloring
algorithm~\cite{Linial1992} and with the
$O(\Delta^2+\log^* n)$-round $(\Delta+1)$-coloring algorithm of Goldberg,
Plotkin, and Shannon~\cite{GoldbergPlotkinShannon1988}. Together with
repeated applications of a simple one-round color reduction, Linial's algorithm
gives a $(\Delta+1)$-coloring in
$O(\Delta\log\Delta+\log^* n)$
rounds~\cite{SzegedyVishwanathan1993,KuhnWattenhofer2006}.
Barenboim and Elkin, and independently Kuhn, reduced this to
$O(\Delta+\log^* n)$~\cite{BarenboimElkin2009,Kuhn2009,barenboim14distributed}.
These papers introduced defective coloring as a tool for distributed
symmetry breaking. In a $d$-defective coloring, each node has at most $d$ neighbors
with the same color. Many subsequent deterministic distributed coloring algorithms
use some form of defective coloring as a subroutine. The first algorithm
for $(\Delta+1)$-coloring with a sublinear dependence on $\Delta$ was obtained by
Barenboim~\cite{barenboim16sublinear}. He obtained a $(\Delta+1)$-coloring
in time $\widetilde O(\Delta^{3/4})+O(\log^* n)$ and a
$(1+o(1))\Delta$-coloring in time $\widetilde O(\sqrt{\Delta})+O(\log^* n)$.
Barenboim's algorithms also work in the \CONGEST model. Fraigniaud, Heinrich, and
Kosowski~\cite{fraigniaud16local} subsequently extended Barenboim's
framework and improved the bound for $(\Delta+1)$-coloring to
$\widetilde O(\sqrt{\Delta})+O(\log^* n)$ in \LOCAL.
Subsequent improvements to the decomposition and list-coloring steps
led to the bound
\[
    O\bigl(\sqrt{\Delta\log\Delta}+\log^* n\bigr)
\]
for $(\deg+1)$-list coloring in \LOCAL when the color space has size at most
$2^{\operatorname{poly}(\Delta)}$~\cite{fraigniaud16local,BEG18,MausTonoyan20}.

The bandwidth restriction is relevant for these algorithms. The fast
local conflict-coloring routines communicate descriptions of sets that
can be much larger than $O(\log n)$ bits. Fuchs and
Kuhn~\cite{FuchsK23} obtained the first
$\widetilde O(\sqrt{\Delta})+O(\log^* n)$-round \CONGEST algorithm
for $(\Delta+1)$-coloring, based on list defective coloring.
For $(\deg+1)$-list coloring with a color space of size $O(\Delta)$
(and thus in particular for the standard $(\Delta+1)$-coloring problem),
their bound is
\[
    O\bigl(\sqrt{\Delta}\,\log^2\Delta\,\log^6\log\Delta
        +\log^* n\bigr).
\]
Their subsequent work builds on the two-sweep framework of Barenboim and
Elkin~\cite{BarenboimElkin2009} to give a conceptually simpler list defective
coloring algorithm and an alternative
$\widetilde O(\sqrt{\Delta})+O(\log^* n)$-round
\CONGEST algorithm~\cite{FuchsK26}.

\subsection{Our Main Contribution}
One of the main open questions in the context of local distributed graph
algorithms is whether one can improve the deterministic $\Delta$-complexity
of $(\Delta+1)$-coloring to $o(\sqrt{\Delta})$, while retaining an additive
$O(\log^* n)$ term. The above bounds, however, also
leave a more immediate open question: can we remove the remaining
polylogarithmic factors in the $\Delta$-complexity of $(\Delta+1)$-coloring
and obtain $O(\sqrt{\Delta}+\log^* n)$ rounds in both the \LOCAL and
\CONGEST models?

As our main contribution, we answer this question in the affirmative.
More generally, we prove the following result.

\begin{theorem}
\label{thm:main}
The $(\deg+1)$-list-coloring problem on $n$-node graphs of maximum degree
$\Delta$ can be solved deterministically in
\[
    O\left(\sqrt{\Delta}\cdot \log\left(1+\frac{C}{\Delta}\right)
        +\log^* n\right)
\]
rounds in the \LOCAL model, where $C\geq \Delta+1$ is the size of the color space.
The same bound holds in the \CONGEST model if $C\leq\operatorname{poly}(n)$.
For $C=O(\Delta)$ (and thus in particular for the standard $(\Delta+1)$-coloring problem), the round complexity is $O(\sqrt{\Delta}+\log^* n)$.
\end{theorem}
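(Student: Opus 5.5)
We follow the route announced in the abstract. First, compute an initial $O(\Delta^2)$-coloring in $O(\log^* n)$ rounds using Linial's algorithm. From then on, every step costs a number of rounds that depends only on $\Delta$ and $C$.

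The core tool is Barenboim's polynomial coloring procedure, used as a \emph{list} procedure. Suppose each node of an oriented subgraph has out-degree at most $\beta$ and holds a list of size noticeably larger than $\beta$ times something, inside a color space of size $q$. Each node encodes its current color as a low-degree polynomial over a field of size about $\sqrt{q}$. Over $O(\sqrt{q}/\ldots)$ rounds it tries evaluation points one at a time, committing once it finds an evaluation whose color is free of conflicts with its out-neighbors.

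We plug this into the degree-reduction framework. Compute an arbdefective decomposition of the graph into $O(\sqrt\Delta)$ classes, each inducing a subgraph of arboricity (out-degree under an orientation) $O(\sqrt\Delta)$. Process the classes one after another. Within a class, a node only needs a color that differs from those of its out-neighbors in the class and of previously colored neighbors; the uncolored in-neighbors will handle themselves later, as in the two-sweep scheme.

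Each phase colors enough nodes so that the uncolored degree halves: $\Delta \to \Delta/2$. Phase $i$ therefore runs with degree $\Delta_i=\Delta/2^i$. If we could run it in $O(\sqrt{\Delta_i})$ rounds, the geometric sum would give $O(\sqrt\Delta)$ overall. The problem is that the color space stays of size $C$, so the polynomial procedure run naively in phase $i$ costs about $\sqrt{\Delta_i}\log(C/\Delta_i)$ rounds. Summed over phases, this gives the $O(\sqrt\Delta\log\Delta)$ bound. It also already yields the theorem up to the $\log(1+C/\Delta)$ factor when $\Delta_i$ stays close to $\Delta$.

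The key step is \emph{hashing the color space down to the current degree}. In phase $i$, each node picks a hash function $h$ from a short family $\mathcal H_i$ of functions $[C]\to[O(\Delta_i)]$. It chooses $h$ so that its list $L_v$ maps to a list $h(L_v)$ of size $\Omega(|L_v|)$ with few collisions against its neighbors' relevant colors. Working in the hashed space of size $O(\Delta_i)$ then brings the phase cost down to $O(\sqrt{\Delta_i})$. The cost of agreeing on hash functions must also be $O(\sqrt{\Delta_i})$, or at least must sum geometrically. Since $|\mathcal H_i|=\operatorname{poly}(\Delta_i)$, a node can announce its choice in $O(\log\Delta_i)$ bits, which fits in \CONGEST.

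The existence of a family in which every list of the relevant size has a good function is shown by the probabilistic method, using a union bound over lists and a Chernoff bound on collisions. Each node then finds the same canonical family by exhaustive local search, which costs no rounds. The $\log(1+C/\Delta)$ factor remains only in the first phase, where we cannot hash below $C$ without loss. Alternatively, the first reduction from $C$ to $O(\Delta)$ is paid once via the polynomial procedure on a space of size $C$. In \CONGEST, $C\le\operatorname{poly}(n)$ keeps colors and polynomial coefficients within $O(\log n)$ bits.

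The main obstacle is the hashing step. We need a family small enough both to describe and to be consistent across nodes. It must also guarantee that, after hashing, each node still has slack: a hashed list larger than its hashed out-degree by the factor the polynomial procedure requires. All of this must hold simultaneously against adversarial neighbor lists. We would first try a two-choice argument in which each node picks $h$ to maximize its surviving list size, and then bound the resulting conflicts via the arbdefect.
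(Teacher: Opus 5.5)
There is a genuine gap, at exactly the step you flag as the main obstacle, and the route you suggest for it cannot work. In your hashing step each node picks its own function $h\in\mathcal H_i$ and then runs the polynomial procedure in the hashed space. That procedure is only correct when both endpoints of an edge hash with the \emph{same} function. The bound $|Q_u\cap Q_v|\le r$ lives in one hashed space, and distinct image colors imply distinct actual colors only under a common $h$. If $u$ uses $h_u$ and $v$ uses $h_v\neq h_u$, then $v$ must avoid every actual color $u$ might still pick, i.e.\ $h_u^{-1}(Q_u)\cap L'_u$. The image of that set under $h_v$ has no polynomial structure relative to $Q_v$, so a single out-neighbor can block far more than $r$ of $v$'s candidates, in the worst case all of them. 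The arbdefect bounds only the number of out-neighbors, not the damage each one does.

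The missing idea is to share the hash function and pay for trying all of them. Fix a common family $h_1,\ldots,h_R:[C]\to[8k]$ with $k=\Delta'/2$ and $R=\lceil\log(C/k)\rceil+3$, such that every set of size at least $k$ has more than $k/2$ images under some $h_j$. This needs only a per-list guarantee: $\Pr[|h(S)|\le k/2]\le\binom{8k}{k/2}16^{-k}<2^{-k}$, followed by a union bound over the $\binom{C}{k}$ sets. No Chernoff bound and no neighbor information are involved. Inside each of the $O(\sqrt{\Delta'})$ parts of an outdegree-$\Theta(\sqrt{\Delta'})$ decomposition, try $h_1,\ldots,h_R$ in turn. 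In call $j$, activate exactly the nodes with $|h_j(L'_v)|>k/2$, run the one-round algebraic procedure on the image lists, and lift each image color to a preimage in $L'_v$. Properness is then immediate. Since residual lists only shrink, a node ending with $|L'_v|\ge k$ would have been activated by the function that is good for its final list, so the uncolored degree drops below $\Delta'/2$. Because every function is tried, the family must have $O(\log(1+C/\Delta'))$ members, not $\operatorname{poly}(\Delta_i)$.

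This mechanism is also where the $\log(1+C/\Delta)$ factor comes from, and your accounting misplaces it. A stage costs $O(R\sqrt{\Delta'}+\log^*\Delta)=O(\sqrt{\Delta'}\log(1+C/\Delta')+\log^*\Delta)$. The logarithm is therefore paid in every stage and grows by at most one per halving. The bound holds because $\sum_i 2^{-i/2}(\log(1+C/\Delta)+i)=O(\log(1+C/\Delta))$. Your fallback of running the polynomial procedure once over $[C]$ costs $\Theta(\sqrt C)$ rounds, far above the target when $C\gg\Delta$ (take $C=\Delta^3$).

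Several smaller points in your accounting are also off. The naive per-stage cost without hashing is $\Theta(\sqrt C)$: field size $q=\Theta(\sqrt C)$, outdegree $\beta=\Theta(\Delta_i/q)$, $O(q)$ parts, $O(1)$ rounds each. It is not $\sqrt{\Delta_i}\log(C/\Delta_i)$, which would sum to $O(\sqrt\Delta)$ for $C=O(\Delta)$ and contradict your own $O(\sqrt\Delta\log\Delta)$. Finally, Barenboim's primitive is not a multi-round search over evaluation points. A node chooses the translate $t_v$ maximizing $|L_v\cap\Gamma_{P_{\psi(v)}+t_v}|$, sends $(\psi(v),t_v)$ in one round, and picks a color outside its out-neighbors' sets. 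This succeeds whenever $|L_v|>rq\beta$.
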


In the following, we describe the main ideas that lead to \cref{thm:main}.
We start by discussing the previous sublinear-in-$\Delta$ algorithms in more detail.

\subsection{Previous Sublinear-in-\texorpdfstring{$\Delta$}{Δ} Algorithms}

The sublinear-in-$\Delta$ algorithms discussed above build on the framework
introduced by Barenboim~\cite{barenboim16sublinear}. As a first step, the
algorithm partitions the graph into node-disjoint subgraphs, each with an
orientation of small maximum outdegree. For this, it uses
\emph{arbdefective colorings}, which were introduced by Barenboim and
Elkin~\cite{BarenboimElkin2011}.
A $\beta$-arbdefective coloring partitions the nodes into color classes,
each inducing a graph of arboricity at most $\beta$. The arbdefective
coloring algorithm of~\cite{barenboim16sublinear} computes such a partition
into $O((\Delta/\beta)\operatorname{polylog}\Delta)$ parts in
$O((\Delta/\beta)\operatorname{polylog}\Delta+\log^* n)$ rounds. It also provides each part
with an orientation of maximum outdegree at most $\beta$.

Starting with an uncolored graph, the algorithm colors the
$O((\Delta/\beta)\operatorname{polylog}\Delta)$ parts sequentially, extending the existing
partial coloring at each step. Coloring each part requires solving a
list-coloring problem: each node must choose a color not already used by
a colored neighbor. The key ingredient of Barenboim's algorithm, and of
the subsequent algorithms, is a fast list-coloring algorithm for graphs
with small outdegree, provided that the nodes have sufficiently large lists.

We describe Barenboim's list-coloring procedure in detail because it is
also the basis of our algorithm. The procedure uses polynomials over a
finite field.\footnote{Barenboim~\cite{barenboim16sublinear} presents the procedure
as extending a partial $C$-coloring, with colors already used by neighbors treated as
forbidden colors. We state it directly for lists of available colors from
$\{1,\ldots,C\}$, which is equivalent.}
Using the outdegree version of Linial's color reduction
(see Section~2.3 of the full version of~\cite{MausTonoyan20}), we compute,
in parallel on the oriented parts, a proper auxiliary $(\eta\beta)^2$-coloring
of each part in $O(\log^* n)$ rounds, for a suitable constant $\eta>0$.
For the list-coloring instance on one of the outdegree-$\beta$ parts,
assume that every list $L_v$ consists of colors from
$\{1,\ldots,C\}$. We choose a finite field
$\mathbb F_q$ with $q\geq\eta\beta$ and $q^2\geq C$, and identify the
$C$ colors with distinct pairs in $\mathbb F_q^2$. Each node $v$ also encodes
its auxiliary color as a pair $(a_v,b_v)\in\mathbb F_q^2$ and considers
the polynomials\footnote{This construction extends to polynomials
of any fixed degree $r$. We later use this generalization
with $r=4$; see \cref{sec:algebraic}.}
$p_v(x)=a_vx^2+b_vx+c_v$ for all $c_v\in \mathbb F_q$.
The $q$ sets $\{(x,p_v(x)):x\in\mathbb F_q\}$ obtained by varying
$c_v$ partition $\mathbb F_q^2$. Node $v$ chooses $c_v$ to maximize
the number of colors from its current list $L_v$ in $\{(x,p_v(x)):x\in\mathbb F_q\}$. This set
therefore contains at least $|L_v|/q$ available colors.
Adjacent nodes have distinct polynomials because their auxiliary colors
differ. The associated sets therefore share at most two elements: the
difference of two distinct polynomials of degree at most two has at most
two roots. Thus, if $|L_v|>2q\beta$, node $v$ can choose an available color
from its set that belongs to none of the sets chosen by its at most
$\beta$ outneighbors.
Exchanging the polynomials and announcing the chosen colors takes
$O(1)$ rounds per part.

For $C=\Theta(\Delta)$ and $\beta=O(\sqrt\Delta)$, we can choose
$q=\Theta(\sqrt\Delta)$, so lists larger than a suitable constant times
$\beta\sqrt\Delta$ suffice. With $(1+\varepsilon)\Delta$ colors, every
node always has at least $\varepsilon\Delta$ available colors. Choosing
$\beta=\Theta(\varepsilon\sqrt\Delta)$ with a sufficiently small constant
therefore gives an algorithm running in $\widetilde O(\sqrt\Delta)+O(\log^* n)$ rounds
for any fixed $\varepsilon>0$. To obtain a $(\Delta+1)$-coloring, Barenboim's algorithm
first computes a proper coloring with
$\Delta+\Theta(\beta\sqrt\Delta)$ colors. It then recolors the nodes
in the extra color classes using the first $\Delta+1$ colors, one
class per round. Balancing the number $\Delta/\beta$ of
parts against the number $\beta\sqrt\Delta$ of extra colors, up to
polylogarithmic factors, gives $\beta=\Theta(\Delta^{1/4})$.
This yields a deterministic $(\Delta+1)$-coloring algorithm with round
complexity $\widetilde O(\Delta^{3/4})+O(\log^* n)$.
Barenboim's algorithm also works in \CONGEST, since the polynomials of
degree at most two can be described with $O(\log q)=O(\log n)$ bits.

Fraigniaud, Heinrich, and Kosowski~\cite{fraigniaud16local} extended the
framework in two ways. They introduced a generalization of ordinary list
coloring called \emph{conflict coloring} and gave a fast algorithm for
solving it in graphs with an outdegree-$\beta$ orientation. For sufficiently
large $\beta$, their list-coloring subroutine requires $O(\log^*\Delta)$
\LOCAL rounds when given a proper auxiliary $O(\Delta^2)$-coloring and lists of size at least
$10\beta^2\ln\Delta$. These bounds are independent of the color-space size
$C$. Their second extension concerns how the parts are processed.
Instead of fully coloring all parts in one pass, they color only nodes
with sufficiently large remaining lists and choose $\beta$ so that each pass halves the maximum
uncolored degree. The algorithm then recurses on the remaining graph. Overall, this results in an
$O(\sqrt\Delta\,\log^{5/2}\Delta+\log^*n)$-round $(\deg+1)$-list-coloring
algorithm. The bound was later improved in two steps. Barenboim, Elkin, and
Goldenberg~\cite{BEG18} improved the arbdefective coloring algorithm to obtain a
decomposition into $O(\Delta/\beta)$ parts, each with an orientation of
maximum outdegree at most $\beta$, computable in $O(\Delta/\beta+\log^*n)$ rounds.
This reduces the overall bound to
$O(\sqrt{\Delta\log\Delta}\,\log^*\Delta+\log^*n)$.
Maus and Tonoyan~\cite{MausTonoyan20} then gave a two-round list-coloring
subroutine for outdegree-$\beta$ graphs with the same auxiliary coloring,
lists of size $\Omega(\beta^2\log\Delta)$ with a sufficiently large constant,
and a color space of size
$C\leq2^{\operatorname{poly}(\Delta)}$.
This removes the $\log^*\Delta$ factor, giving the best previous \LOCAL bound of
$O(\sqrt{\Delta\log\Delta}+\log^*n)$ rounds for $(\deg+1)$-list coloring and
standard $(\Delta+1)$-coloring.

The large messages used by these list-coloring procedures prevent a
direct implementation in \CONGEST. Fuchs and Kuhn~\cite{FuchsK23}
overcame this by combining a list version of defective coloring with
color-space reduction. This allows the bounded-outdegree parts to be
colored in several steps, each using a limited number of colors and thus
smaller messages. They obtained a \CONGEST algorithm running in
$\widetilde O(\sqrt\Delta)+O(\log^*n)$ rounds for $(\deg+1)$-list coloring when
$C\leq\operatorname{poly}(\Delta)$. For $C=O(\Delta)$, the bound is
$O(\sqrt\Delta\,\log^2\Delta\,\log^6\log\Delta+\log^*n)$.

\subsection{Combining Barenboim's Algorithm with Degree Reduction}
We now describe the main ideas that lead to our improved bounds.
Our first observation is that Barenboim's algorithm~\cite{barenboim16sublinear},
described above, can be used directly in the
degree-reduction framework of Fraigniaud, Heinrich, and
Kosowski~\cite{fraigniaud16local}. The framework colors the graph in stages,
each halving a bound on the maximum degree of the uncolored subgraph.
Assume we still need to color a graph of maximum degree at most
$\Delta'\leq\Delta$. Our goal is to extend the partial coloring
so that the maximum degree of the uncolored subgraph becomes at most
$\Delta'/2$. Note that an uncolored node with at least
$\Delta'/2$ uncolored neighbors also has at least
$\Delta'/2$ available colors.

Recall that, given a suitable auxiliary coloring, Barenboim's procedure colors
outdegree-$\beta$ graphs in constant time if the lists have size
$\Omega(\beta\sqrt C)$, with a sufficiently large constant. As long as
$\Delta'$ exceeds a sufficiently large constant multiple of
$\sqrt C$, we can therefore choose
$\beta=\Theta(\Delta'/\sqrt C)$ so that lists of size
$\Delta'/2$ suffice. The decomposition of Barenboim, Elkin, and
Goldenberg~\cite{BEG18} then gives
$O(\Delta'/\beta)=O(\sqrt C)$ parts. We process these parts
sequentially, coloring only nodes with at least $\Delta'/2$
currently available colors. Every node left uncolored then has fewer
than $\Delta'/2$ available colors and hence fewer than
$\Delta'/2$ uncolored neighbors. Thus, a stage halves the degree
bound in $O(\sqrt C)$ rounds.

For $C=O(\Delta)$, repeating these stages $O(\log\Delta)$ times and using a simple
finishing procedure once the remaining degree is $O(\sqrt\Delta)$ gives
an $O(\sqrt\Delta\log\Delta+\log^* n)$-round algorithm. It works in
\CONGEST and already improves on the previous bound in that model.

We can improve this further by stopping after
$O(\log\log\Delta)$ stages, when the maximum uncolored degree is at most
$\Delta/\log^a\Delta$ for a sufficiently large constant $a$.
An existing list-coloring algorithm~\cite{MausTonoyan20,FuchsK23} then
finishes in $O(\sqrt\Delta+\log^* n)$ rounds. Thus, this combination
alone gives an $O(\sqrt\Delta\log\log\Delta+\log^* n)$-round algorithm
in both \LOCAL and \CONGEST, improving on the previous \LOCAL bound
as well.

\subsection{Reducing the Color Space by Hashing}
In the approach above, each degree-halving stage takes
$O(\sqrt C)$ rounds: Barenboim's procedure still uses the original
color space, even as the degree decreases. To make later stages faster,
we use hashing to shrink the color space along with the degree.

For the current degree bound $\Delta'$, set
$k=\Delta'/2$. We use a common sequence of $O(\log(1+C/k))$ hash functions from $[C]$ to
a color space of size $O(k)$ such that every list of at least $k$
colors has more than $k/2$ distinct images under at least one function.
We partition the uncolored graph into $O(\sqrt{\Delta'})$
parts of outdegree $O(\sqrt{\Delta'})$ and process the parts
sequentially. Within each part, we try all hash functions in order,
using Barenboim's procedure to color nodes whose image lists have more
than $k/2$ colors. All active nodes in a call use the same function,
so choosing preimages from their available lists preserves properness.

If a node remained uncolored with at least $k$ available colors,
some function would preserve more than $k/2$ images of its final list.
Since lists only shrink, that function would have allowed us to color
the node when it was tried in the node's part. Thus, every remaining list has fewer
than $k$ colors, and hence every uncolored node has fewer than
$k=\Delta'/2$ uncolored neighbors. Trying all functions in one part
takes $O(\log(1+C/\Delta'))$ rounds. Processing all
$O(\sqrt{\Delta'})$ parts, including the partition computation, takes
$O(\sqrt{\Delta'}\log(1+C/\Delta')+\log^*\Delta)$
rounds per stage. Summing over the degree scales gives \cref{thm:main}.

We prove the existence of the hash family by the probabilistic method.
All nodes can find the same family by deterministic exhaustive search,
since local computation is unrestricted. In
Appendix~\ref{sec:explicit-hashing}, we give an alternative based on explicit
lossless expanders that avoids this search. It replaces the factor
$\log(1+C/\Delta)$ in the round complexity by
$2^{O((\log\log(4+C/\Delta))^3)}$, while retaining
$O(\sqrt\Delta+\log^* n)$ rounds for $C=O(\Delta)$.

\section{Model and Preliminaries}
\label{sec:preliminaries}

\subsection{Basic Definitions}

We use the synchronous \LOCAL and \CONGEST models described in
\cref{sec:introduction}. All graphs are finite, simple, and undirected. As part of our algorithm, we compute edge orientations, which are used to assign responsibility for avoiding color conflicts. When we work with oriented
graphs, communication remains possible in both directions on every edge. Orientations need not be acyclic. We write $N_H(v)$ for the neighbors of $v$ in a graph $H$, and $N_H^+(v)$ for its outneighbors in an oriented graph. We omit the subscript when the graph is clear.

For an integer $a\geq1$, we write $[a]=\{1,\ldots,a\}$. In our list-coloring instances, the color
lists are subsets of $[C]$, and we assume that all nodes know $\Delta$ and $C$.
In \CONGEST, we assume $C\leq n^{O(1)}$. We assume $\Delta\geq2$,
since the case $\Delta\leq1$ can be solved in one round. All logarithms
are binary. The iterated logarithm $\log^* x$ is the minimum number of
successive applications of the logarithm needed to reduce $x$ to at most
one. In asymptotic bounds, we take all logarithms and iterated logarithms
to be at least one.

\subsection{Computing Auxiliary and Arbdefective Colorings}
We distinguish the colors chosen from the input lists from auxiliary
colors used for symmetry breaking and scheduling. The following
standard form of Linial's color reduction will be useful.

\begin{theorem}[Color reduction~\cite{Linial1992}]
\label{thm:linial}
There is an absolute constant $\kappa$ such that a graph of maximum
degree at most $\Delta\geq1$, equipped with a proper $\gamma$-coloring, can be
properly colored with at most $\kappa\Delta^2$ colors in
$O(\log^* \gamma)$ rounds. The algorithm works in \CONGEST when
$\gamma\leq n^{O(1)}$. In particular, unique identifiers give a proper
$O(\Delta^2)$-coloring in $O(\log^* n)$ rounds.
\end{theorem}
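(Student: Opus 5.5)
This is Linial's classical result, so the plan is to reconstruct his argument from two ingredients. The first is a one-round color-reduction step based on set systems. The second is iterating that step until the palette size stops shrinking.

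\emph{One-round step.} Suppose we are given a proper $m$-coloring of a graph of maximum degree $\Delta$. We want a family of $m$ sets $S_1,\ldots,S_m$ over a ground set $[m']$ such that no set is covered by the union of any $\Delta$ others (a $\Delta$-cover-free family). Given such a family, the step is:
\begin{enumerate}
\item Each node $v$ with color $c(v)$ learns its neighbors' colors in one round.
\item It picks an element of $S_{c(v)}\setminus\bigcup_{u\in N(v)}S_{c(u)}$ as its new color.
\end{enumerate}
Adjacent nodes $u,v$ then get distinct new colors: $u$'s new color lies in $S_{c(u)}$, which $v$ avoided, and symmetrically.

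For the families I will use the polynomial construction that the paper already describes for Barenboim's procedure. Take a prime power $q$ and a degree bound $d$. Identify each color with a distinct polynomial of degree at most $d$ over $\mathbb F_q$; this needs $q^{d+1}\ge m$. Let $S_c=\{(x,p_c(x)) : x\in\mathbb F_q\}$. Two distinct polynomials of degree at most $d$ agree on at most $d$ points, so $\Delta$ other sets cover at most $d\Delta$ points of $S_c$. Hence $q>d\Delta$ suffices, and the new palette has size $m'=q^2$.

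\emph{Iteration.} Let $d=\lceil\log_q m\rceil$ and $q=\Theta(\Delta\log m)$, rounded up to a prime power by Bertrand's postulate. Then $d\Delta<q$, and $m'=O(\Delta^2\log^2 m)$. Starting from $m=\gamma$, the palette size goes $\gamma\mapsto O(\Delta^2\log^2\gamma)$, and the $\log$ shrinks by roughly a $\log$ each round. Hence after $O(\log^*\gamma)$ rounds, $m=O(\Delta^2\log^2\Delta)$.

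Once $m=O(\Delta^2\log^2\Delta)$, the final steps go as follows.
\begin{enumerate}
\item Use $d=2$ and $q$ a prime power in $[2\Delta+1,\,4\Delta+2]$. This requires $q^3\ge m$, which holds for large $\Delta$ since $\Delta^3\gg\Delta^2\log^2\Delta$. One more step then gives $q^2=O(\Delta^2)$ colors.
\item For small $\Delta$, either absorb the constants into $\kappa$, or use $d=3$ and repeat a constant number of steps.
\end{enumerate}
This yields $\kappa\Delta^2$ colors for an absolute constant $\kappa$. The main obstacle is this last step, where the $\log^2$ factor must be removed. The plan is to choose $d$ constant, exploiting that $\Delta^{d+1}$ dominates the current palette size.

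\emph{CONGEST.} Each node sends only its current color. That color is $O(\log\gamma)=O(\log n)$ bits at the start and only smaller afterwards. All nodes compute the same families deterministically from $m,\Delta$, which are globally known, so the algorithm runs in \CONGEST.

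\emph{Identifiers.} Unique identifiers form a proper $n^{O(1)}$-coloring, so applying the above with $\gamma=n^{O(1)}$ gives $O(\log^* n)$ rounds.
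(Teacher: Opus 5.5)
Your proof is correct. The paper gives no proof of its own and only cites Linial, and your reconstruction is the standard argument behind that citation: one-round reductions via $\Delta$-cover-free polynomial families, iterated $O(\log^*\gamma)$ times down to $O(\Delta^2\log^2\Delta)$ colors, then one step with quadratic polynomials over a field of size $\Theta(\Delta)$ to reach $O(\Delta^2)$ colors. The only difference from Linial's original is that in the iterative phase he uses Erd\H{o}s--Frankl--F\"uredi families with ground set $O(\Delta^2\log m)$, whereas your explicit polynomial families cost an extra logarithmic factor per step, which does not affect the $O(\log^*\gamma)$ bound.
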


A $\beta$-arbdefective $p$-coloring is a partition into $p$ color
classes such that each class induces a graph of arboricity at most
$\beta$, where arboricity is the minimum number of forests needed to
cover the edges. We use the \emph{oriented} version: together with the
partition, the algorithm supplies an orientation of each induced
subgraph with maximum outdegree at most $\beta$. Such an orientation
implies arboricity at most $\beta+1$; conversely, a graph of arboricity
at most $\beta$ has an orientation of outdegree at most $\beta$. These
constant differences are immaterial for our applications. We state
the required decomposition directly in terms of outdegree.

\begin{theorem}[Arbdefective decomposition~\cite{BEG18,MausTonoyan20}]
\label{thm:partition}
Given a graph of maximum degree at most $\Delta$, a proper $\gamma$-coloring,
and an integer $1\leq\beta\leq\Delta$, one can compute a partition
\[
    V=V_1\mathbin{\dot\cup}\cdots\mathbin{\dot\cup}V_p,
    \qquad p=O(\Delta/\beta),
\]
and an orientation of each $G[V_i]$ of maximum outdegree at most
$\beta$, in $O(\Delta/\beta+\log^* \gamma)$ deterministic \CONGEST rounds.
\end{theorem}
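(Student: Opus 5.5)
The statement is the arbdefective decomposition of Barenboim, Elkin, and Goldenberg~\cite{BEG18}, so we can in principle just cite it. To make the plan self-contained, I would reprove it as a defective-coloring step followed by a BEG-style additive-group sweep.

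\textbf{Step 1 (defective coloring).} Set $d=\lfloor\beta/4\rfloor$ (if $\beta<4$, take $d=0$ and use a proper coloring). Using the defective variant of Linial's color reduction from~\cite{Kuhn2009} together with \cref{thm:linial}, compute in $O(\log^*\gamma)$ rounds a $d$-defective coloring with $O((\Delta/\beta)^2)$ colors. Choose a prime $q$ with $c\Delta/\beta\le q=O(\Delta/\beta)$ for a large constant $c$, so that $q(q-1)$ is at least the number of colors. Then encode each defective color as a pair $(a_v,b_v)\in\mathbb F_q^*\times\mathbb F_q$. Two neighbors with the same pair are defective neighbors, so each node has at most $d$ of them.

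\textbf{Step 2 (sweep over $q$ rounds).} The parts will be $V_1,\ldots,V_q$, identified with $\mathbb F_q$, so $p=q=O(\Delta/\beta)$. In round $t=0,1,\ldots,q-1$, every uncommitted node $v$ \emph{tries} the part $x_v(t)=b_v+t\,a_v$. Call a neighbor $u$ a \emph{blocker} of $v$ in round $t$ if one of the following holds:
\begin{itemize}
\item $u$ is already committed to part $x_v(t)$, or
\item $u$ tries the same part in round $t$.
\end{itemize}
Node $v$ commits to $x_v(t)$ as soon as it has at most $\beta$ blockers in that round. Inside each part, we orient every edge from the later-committing endpoint to the earlier one, and break same-round ties arbitrarily (e.g.\ by identifier). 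A node's out-edges then go only to its blockers at its commit round, so the outdegree is at most $\beta$. Each round needs only $O(1)$ messages of $O(\log q)$ bits, so the sweep runs in \CONGEST.

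\textbf{Step 3 (every node commits; the main step).} I need to show that every node commits within the $q$ rounds. For this, bound the total number of blocking events over all rounds in which $v$ is uncommitted.
\begin{itemize}
\item A committed neighbor $u$ sits in a fixed part. Since $a_v\neq0$, the trajectory of $v$ visits each part at most once, so $u$ blocks $v$ in at most one round.
\item A trying neighbor $u$ with $a_u\neq a_v$ collides with $v$ in exactly one round $t$, namely the solution of $b_u+ta_u=b_v+ta_v$.
\item A trying neighbor $u$ with $a_u=a_v$ and $b_u\neq b_v$ never collides with $v$.
\item A trying neighbor with the same pair collides with $v$ in every round, but there are at most $d$ of these.
\end{itemize}
So the total number of blocking events is at most $2\Delta+dq$. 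If $v$ never committed, each of the $q$ rounds would contribute more than $\beta$ events, giving $q\beta<2\Delta+dq\le 2\Delta+q\beta/4$. This contradicts $q\ge c\Delta/\beta$ once $c\ge 3$.

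The delicate point is exactly this counting. It relies on $a_v\ne0$ so that trajectories are injective, and on defective neighbors being the only repeated collisions. The rest (Linial plus the defective variant giving the pairs in $O(\log^*\gamma)$ rounds, and the $q=O(\Delta/\beta)$ sweep rounds) gives the claimed $O(\Delta/\beta+\log^*\gamma)$ bound.
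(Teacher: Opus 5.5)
Your proof is correct, but it takes a different route from the paper, which gives no proof of its own here. For $\beta$ above an absolute constant, the paper simply invokes the Barenboim--Elkin--Goldenberg decomposition as formulated in Lemma~4.1 of the full version of Maus and Tonoyan. For the remaining constant values of $\beta$, it computes a proper $O(\Delta)$-coloring (Linial's reduction followed by the BEG color reduction) in $O(\Delta+\log^*\gamma)$ rounds and uses the independent color classes as parts of outdegree zero. It adds one round to inform both endpoints of part memberships and orientations.

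You instead reconstruct the BEG additive-group argument: a $\lfloor\beta/4\rfloor$-defective $O((\Delta/\beta)^2)$-coloring encoded as pairs $(a_v,b_v)\in\mathbb F_q^*\times\mathbb F_q$, the sweep $x_v(t)=b_v+ta_v$, and a blocker count. The count is right. A neighbor with a different pair contributes at most one collision while trying and at most one while committed; the latter holds because $t\mapsto b_v+ta_v$ is a bijection of $\mathbb F_q$. A same-pair neighbor contributes at most one collision per round. A node that never commits would therefore force $q(\beta+1)\le 2\Delta+dq$, contradicting $q\ge 3\Delta/\beta$. Out-edges of a node go only to its blockers at its commit round, which gives the outdegree bound.

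Your route gives a self-contained proof that treats all $1\le\beta\le\Delta$ uniformly: taking $d=0$ with a proper coloring for $\beta<4$ makes the paper's small-$\beta$ case split unnecessary, and the \CONGEST implementation becomes explicit. The cost is that it replaces the single cited decomposition with another external ingredient, Kuhn's defective variant of Linial's color reduction. The paper deliberately keeps its black boxes to Linial's reduction and the decomposition itself. The details still to spell out are minor and fit within the stated bound: $O(1)$ initial rounds to exchange the pairs and the tie-breaking labels (identifiers, or the given proper $\gamma$-coloring), and a one-bit commit announcement after each sweep round.
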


For $\beta$ above an absolute constant, this is the formulation of the
result of Barenboim, Elkin, and Goldenberg given in Lemma~4.1 of the
full version of Maus and Tonoyan~\cite{MausTonoyan20}. For smaller
positive $\beta$, compute a proper $O(\Delta)$-coloring in
$O(\Delta+\log^* \gamma)$ \CONGEST rounds using \cref{thm:linial} followed by
the color reduction of~\cite{BEG18}. Its independent color classes
have outdegree zero, and $O(\Delta)=O(\Delta/\beta)$ in this case.
The stated bound also covers one round to inform both endpoints of the
part memberships and edge orientations. Part indices are taken from a
commonly known range of size $O(\Delta/\beta)$, so processing the parts
does not require finding the largest index actually used. Restricting a part to a subset of its
nodes preserves the outdegree bound.

\subsection{Polynomials over Finite Fields}
For every real $x\geq2$, a power of two $q$ can be chosen with
$x\leq q<2x$, and a field $\mathbb F_q$ exists. We use a fixed
representation of this field known to all nodes. For a polynomial
$P\in\mathbb F_q[X]$, let
\begin{equation}
\label{eq:polynomial-graph}
    \Gamma_P:=\{(x,P(x)):x\in\mathbb F_q\}
\end{equation}
be its graph. The following lemma collects the properties used in
the algebraic list-coloring primitive.

\begin{lemma}[Polynomial partitions]
\label{lem:polynomial-partitions}
Let $q$ be a prime power and $r\geq1$ an integer. There are $q^r$
polynomials over $\mathbb F_q$ with zero constant coefficient and
degree at most $r$. For each such polynomial $P$, the $q$ sets
$\Gamma_{P+t}$, $t\in\mathbb F_q$, partition $\mathbb F_q^2$ into
sets of size $q$. Moreover, for distinct such polynomials $P,Q$
and any $s,t\in\mathbb F_q$,
\[
    |\Gamma_{P+s}\cap\Gamma_{Q+t}|\leq r.
\]
\end{lemma}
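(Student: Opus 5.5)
The lemma has three parts, and I would prove them in order: the count of polynomials, the partition property, and the intersection bound.

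\emph{Counting.} A polynomial of degree at most $r$ with zero constant coefficient has the form $P(X)=\sum_{i=1}^{r}a_iX^i$. The coefficient vector $(a_1,\ldots,a_r)\in\mathbb F_q^r$ is arbitrary, and distinct vectors give distinct polynomials, since polynomials are compared formally as elements of $\mathbb F_q[X]$. Hence there are exactly $q^r$ such polynomials.

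\emph{Partition.} Fix $P$. For each $t$, the set $\Gamma_{P+t}$ is the graph of the function $x\mapsto P(x)+t$, so it contains exactly one point with each first coordinate and has size $q$. Given any $(x,y)\in\mathbb F_q^2$, the unique $t$ with $(x,y)\in\Gamma_{P+t}$ is $t=y-P(x)$. Therefore the $q$ sets $\Gamma_{P+t}$ are pairwise disjoint and cover $\mathbb F_q^2$; the count $q\cdot q=q^2$ is consistent with this.

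\emph{Intersection bound.} Let $P\neq Q$ and $s,t\in\mathbb F_q$. A point lies in $\Gamma_{P+s}\cap\Gamma_{Q+t}$ exactly when it equals $(x,P(x)+s)$ for some $x$ with $P(x)+s=Q(x)+t$. Such an $x$ is a root of $D:=(P-Q)+(s-t)$, and the intersection is in bijection with these roots. Since $P$ and $Q$ both have zero constant coefficient, $P-Q$ is a nonzero polynomial with no constant term. Consequently $D$ is nonzero: either $P-Q$ contributes a nonconstant monomial, or $P-Q=0$, which is excluded. Moreover $\deg D\leq r$. A nonzero polynomial of degree at most $r$ over a field has at most $r$ roots, which gives the bound.

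The only point that needs care is this last nonvanishing argument. It is exactly why the lemma requires zero constant coefficients: if $P$ and $Q$ differed only in their constant terms, we could choose $s,t$ to make $D\equiv 0$, and the two graphs would coincide entirely. Everything else in the proof is routine.
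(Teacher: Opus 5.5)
Your proposal is correct and follows essentially the same argument as the paper: you count the $r$ free nonconstant coefficients, you place each point $(x,y)$ in the unique translate with $t=y-P(x)$, and you bound the intersection by the roots of the nonzero polynomial $P-Q+s-t$ of degree at most $r$. Your extra remarks (formal distinctness of polynomials, and why zero constant coefficients are needed) only spell out details the paper leaves implicit.
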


\begin{proof}
The $r$ nonconstant coefficients (i.e., all coefficients except the additive constant $t$) can be chosen independently,
giving $q^r$ polynomials. Each graph contains one point for each
$x\in\mathbb F_q$, and a point $(x,y)$ belongs to the unique
translate with $t=y-P(x)$. Finally, $P-Q+s-t$ is nonzero because
$P$ and $Q$ differ in a nonconstant coefficient. It has degree at
most $r$ and therefore at most $r$ roots, which bound the number
of intersection points.
\end{proof}

\section{The New Coloring Algorithm}
\label{sec:algorithm}

We first formulate Barenboim's algebraic coloring procedure~\cite{barenboim16sublinear} as a
general list-coloring primitive and combine it with the degree-halving
strategy of~\cite{fraigniaud16local}. We then show how hashing can be used to improve the complexity of the algorithm and prove
\cref{thm:main}. The only external distributed subroutines needed for
the proof of the main theorem are \cref{thm:linial,thm:partition}.

\subsection{Basic Algebraic List Coloring with Bounded Outdegree}
\label{sec:algebraic}

The following primitive is Barenboim's polynomial coloring
procedure~\cite{barenboim16sublinear}, stated for arbitrary color
lists and with an explicit dependence on the color-space size.

We are given an oriented graph $H$, a proper auxiliary $\gamma$-coloring
$\psi$, and a list $L_v\subseteq[C]$ at each node $v$.
Choose a prime power $q$ and an integer $r\geq1$ such that
\[
    C\leq q^2\qquad\text{and}\qquad \gamma\leq q^r.
\]
Represent the colors in $[C]$ by distinct pairs in $\mathbb F_q^2$
and identify colors with these pairs. To each auxiliary color
$i\in[\gamma]$, assign a distinct polynomial $P_i$ with zero constant
coefficient and degree at most $r$. Such an assignment exists by
\cref{lem:polynomial-partitions}. All nodes use the same representation
of colors and assignment of polynomials.

Each node chooses the constant term of its polynomial so that the
corresponding set contains as many available colors as possible.
It then chooses an available color outside the sets selected by its
outneighbors. \Cref{fig:algebraic-choice} illustrates these two steps.
The following pseudocode gives the details.

\begin{mdframed}[linewidth=0.4pt,nobreak=true]
\noindent\textbf{Algebraic list coloring at node $v$}
\begin{enumerate}[label=\arabic*.,leftmargin=*,itemsep=2pt]
    \item \textbf{Choose a set.} Using the graph notation from
    \eqref{eq:polynomial-graph}, choose $t_v\in\mathbb F_q$ maximizing
    $|L_v\cap\Gamma_{P_{\psi(v)}+t_v}|$, and set
    \[
        Q_v=\Gamma_{P_{\psi(v)}+t_v},\qquad S_v=L_v\cap Q_v.
    \]
    \item \textbf{Exchange descriptions.} Send $(\psi(v),t_v)$ to all
    neighbors and reconstruct $Q_u$ for every outneighbor $u$ from
    its message.
    \item \textbf{Choose a color.} Select a color in
    \[
        S_v\setminus\bigcup_{u\in N_H^+(v)}Q_u
    \]
    and output its original label in $[C]$.
\end{enumerate}
\end{mdframed}

\begin{figure}[tbp]
    \centering
    % Exact example: P_v=X^2 over F_7; the seven translate counts are
% (5,4,4,4,4,4,4). The two outneighbors intersect Q_v at x=0,1
% and x=2,3, respectively.
\begin{tikzpicture}[
    x=1cm,y=1cm,font=\footnotesize,
    every node/.style={inner sep=1pt},
    panel title/.style={font=\small\bfseries,anchor=west},
    candidate/.style={circle,draw=figBlue,fill=figBlue!14,
        minimum size=4.4mm,inner sep=0pt,line width=.7pt}]
    \path[use as bounding box] (-.10,-.12) rectangle (15.75,5.05);
    \node[panel title] at (0,4.77) {(a) Choose a set with many available colors};
    \node[panel title] at (8.05,4.77) {(b) Remove conflicts with outneighbors};
    \draw[black!12] (7.55,.15) -- (7.55,4.3);

    % Partition of F_7^2: one background color for each translate.
    \foreach \t/\shade in {0/figBlue,1/figOrange,2/figPurple,
            3/figTeal,4/figSky,5/figVermilion,6/figGray} {
        \foreach \x in {0,...,6} {
            \pgfmathtruncatemacro{\y}{mod(\x*\x+\t,7)}
            \ifnum\t=0
                \path[fill=\shade!17,draw=\shade,line width=.55pt]
                    ({.55+.43*\x-.205},{1.20+.43*\y-.205})
                    rectangle ++(.41,.41);
                \pgfmathtruncatemacro{\available}{\x<5}
            \else
                \path[fill=\shade!18,draw=white,line width=.55pt]
                    ({.55+.43*\x-.205},{1.20+.43*\y-.205})
                    rectangle ++(.41,.41);
                \pgfmathtruncatemacro{\available}{mod(\x-\t+7,7)<4}
            \fi
            \ifnum\available=1
                \fill[black!85] ({.55+.43*\x},{1.20+.43*\y})
                    circle[radius=1.65pt];
            \fi
        }
    }
    \foreach \x in {0,...,6} {
        \node[text=black!65,font=\scriptsize] at ({.55+.43*\x},.81) {$\x$};
        \node[text=black!65,font=\scriptsize] at (.14,{1.20+.43*\x}) {$\x$};
    }
    \node at (3.53,.81) {$x$};
    \node at (.14,4.15) {$y$};
    \node at (1.85,4.20) {$\mathbb F_7^2$};
    \fill[black!85] (.75,.34) circle[radius=1.65pt];
    \node[anchor=west] at (.94,.34) {available color in $L_v$};

    % The counts explain the maximization without a second large grid.
    \node at (4.65,4.20) {shift $t$};
    \node at (6.18,4.20) {available};
    \path[fill=figBlue!7,rounded corners=2pt]
        (3.95,3.55) rectangle (6.95,3.99);
    \foreach \t/\shade in {0/figBlue,1/figOrange,2/figPurple,
            3/figTeal,4/figSky,5/figVermilion,6/figGray} {
        \path[draw=\shade,fill=\shade!18,line width=.55pt]
            (4.15,{3.77-.43*\t}) ++(-.09,-.09) rectangle ++(.18,.18);
        \node at (4.65,{3.77-.43*\t}) {$\t$};
        \ifnum\t=0
            \node[text=figBlue] at (5.27,3.77) {$Q_v$};
            \node[text=figBlue,font=\footnotesize\bfseries] at (6.18,3.77) {5};
        \else
            \node at (6.18,{3.77-.43*\t}) {4};
        \fi
    }
    \node at (5.43,.34) {$|S_v|\geq |L_v|/q$};

    % Five candidates, with exclusions aligned by candidate.
    \foreach \i/\point in {0/{(0,0)},1/{(1,1)},2/{(2,4)},
                           3/{(3,2)},4/{(4,2)}} {
        \node[font=\scriptsize] at ({9.72+1.17*\i},4.20) {$\point$};
        \node[candidate] at ({9.72+1.17*\i},3.65) {};
        \fill[figBlue] ({9.72+1.17*\i},3.65) circle[radius=1.65pt];
    }
    \node[anchor=west] at (8.05,3.65) {$S_v$};
    \node[anchor=west,text=figOrange] at (8.05,2.81) {$Q_{u_1}$};
    \node[anchor=west,text=figPurple] at (8.05,1.97) {$Q_{u_2}$};
    \node[anchor=west,text=figTeal] at (8.05,1.13) {choose};
    \foreach \i in {0,1} {
        \draw[figOrange,line width=1.3pt,line cap=round]
            ({9.72+1.17*\i-.075},2.735) -- ++(.15,.15)
            ({9.72+1.17*\i-.075},2.885) -- ++(.15,-.15);
    }
    \foreach \i in {2,3} {
        \draw[figPurple,line width=1.3pt,line cap=round]
            ({9.72+1.17*\i-.075},1.895) -- ++(.15,.15)
            ({9.72+1.17*\i-.075},2.045) -- ++(.15,-.15);
    }
    \foreach \i in {2,3,4} {
        \node[text=black!35] at ({9.72+1.17*\i},2.81) {--};
    }
    \foreach \i in {0,1,4} {
        \node[text=black!35] at ({9.72+1.17*\i},1.97) {--};
    }
    \node[candidate,draw=figTeal,fill=figTeal!14,line width=1pt]
        at (14.40,1.13) {};
    \fill[figTeal] (14.40,1.13) circle[radius=2.4pt];
    \node at (11.45,.34) {$5-2-2=1$ available color};
\end{tikzpicture}
    \caption{Algebraic list coloring with $q=7$, $r=2$, and two
    outneighbors. (a) The shaded cells show the seven sets
    $\Gamma_{X^2+t}$ defined in \eqref{eq:polynomial-graph}; black dots
    mark the 29 colors in $L_v$.
    The selected set $Q_v=\Gamma_{X^2}$ contains five available colors.
    (b) The outneighbors' sets
    $Q_{u_1}=\Gamma_{2X^2-X}$ and
    $Q_{u_2}=\Gamma_{2X^2+2X+6}$ each exclude two candidates,
    leaving $(4,2)$ available. All arithmetic is over $\mathbb F_7$.}
    \label{fig:algebraic-choice}
\end{figure}

\begin{lemma}[Algebraic list coloring]
\label{lem:algebraic}
If $|L_v|>rq\,|N_H^+(v)|$ for every node $v$ of $H$, the above
algorithm computes a proper list coloring of $H$
in one communication round using $O(\log \gamma+\log q)$-bit messages.
Announcing the selected colors takes one additional round with
$O(\log C)$-bit messages.
\end{lemma}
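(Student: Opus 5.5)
The plan is to verify three properties of the algorithm in turn: that each node finds a candidate color, that the resulting coloring is proper, and that the communication costs are as stated. All three rest on \cref{lem:polynomial-partitions}.

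\emph{Existence of a color.} For a fixed node $v$, the $q$ translates $\Gamma_{P_{\psi(v)}+t}$, $t\in\mathbb F_q$, partition $\mathbb F_q^2$ by \cref{lem:polynomial-partitions}. The list $L_v$ is contained in $\mathbb F_q^2$ because $C\leq q^2$ and colors are identified with distinct pairs. By pigeonhole, the maximizing choice $t_v$ therefore gives $|S_v|\geq |L_v|/q$.

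Next, consider an outneighbor $u$ of $v$. Since $u$ and $v$ are adjacent and $\psi$ is proper, $\psi(u)\neq\psi(v)$, so $P_{\psi(u)}$ and $P_{\psi(v)}$ are distinct polynomials with zero constant coefficient and degree at most $r$. The intersection bound in \cref{lem:polynomial-partitions} then gives $|Q_v\cap Q_u|\leq r$. Consequently
\[
    \Bigl|S_v\setminus\bigcup_{u\in N_H^+(v)}Q_u\Bigr|\geq \frac{|L_v|}{q}-r\,|N_H^+(v)|>0,
\]
where the last step uses the hypothesis $|L_v|>rq|N_H^+(v)|$. So step~3 always succeeds, and the chosen color lies in $S_v\subseteq L_v$.

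\emph{Properness.} Take any edge of $H$ and orient it as $v\to u$. The color chosen by $v$ lies outside $Q_u$, while the color chosen by $u$ lies in $S_u\subseteq Q_u$. Hence the two colors differ, and since every edge has an orientation, this covers every edge. This step needs care only in one respect: the orientation need not be acyclic, but that is irrelevant, because each edge is protected by its tail alone and no ordering of the nodes is required.

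\emph{Communication.} The messages $(\psi(v),t_v)$ are sent in a single round and consist of $O(\log\gamma+\log q)$ bits. Since $\psi$, $P_{\psi(u)}$ and $t_u$ determine $Q_u$ under the commonly known representation, $v$ can reconstruct $Q_u$ locally. After this round, each node selects its color using local computation only. Announcing the chosen color, a label in $[C]$, takes one more round with $O(\log C)$-bit messages.

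I expect no genuine obstacle in this proof. The only point requiring attention is checking that the pigeonhole bound and the intersection bound combine to give the strict inequality above.
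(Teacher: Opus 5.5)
Your proposal is correct and follows essentially the same argument as the paper. It uses pigeonhole over the $q$ translates to get $|S_v|\geq|L_v|/q$, and the at-most-$r$ intersection bound (from properness of $\psi$) to show that step~3 succeeds. Properness then comes from the tail of each oriented edge avoiding $Q_u$ while $u$'s color lies in $Q_u$, and your message-size accounting matches the paper's.
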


\begin{proof}
For every node $v$, the partition property of
\cref{lem:polynomial-partitions} gives
\[
    |S_v|\geq |L_v|/q>r|N_H^+(v)|.
\]
Since $\psi$ is proper, the same lemma gives
$|Q_v\cap Q_u|\leq r$ for every edge $uv$ of $H$.
The outneighbors therefore exclude at most
$r|N_H^+(v)|$ elements of $S_v$, leaving an available color in step 3.

For an edge oriented from $v$ to $u$,
the color selected by $v$ lies outside $Q_u$, whereas the color
selected by $u$ lies in $Q_u$. The output is therefore a proper
coloring of $H$ from the original lists.

Only the second step uses communication. Its message consists of an
auxiliary color and a field element and has
$O(\log \gamma+\log q)$ bits. Announcing an actual color costs
$O(\log C)$ bits. No list or candidate set is transmitted.
\end{proof}

For a proper auxiliary coloring with $\gamma=O(C^2)$ colors, a field of
size $q=\Theta(\sqrt C)$ and polynomials of degree $r=4$ suffice.
Thus lists of size $\Omega(\beta\sqrt C)$, with a sufficiently large
constant, suffice for outdegree at most $\beta$. If the auxiliary
coloring already has at most $q^2$ colors, quadratic polynomials suffice, and we can take
$P_i(X)=a_iX^2+b_iX$.

\begin{remark}
In each degree-halving stage below, we could compute a fresh proper
$O(\beta^2)$ auxiliary coloring of each outdegree-$\beta$ part using the
outdegree version of Linial's reduction
(see Section~2.3 of the full version of~\cite{MausTonoyan20}).
Starting from the retained coloring, this takes $O(\log^*\Delta)$ rounds
in parallel over all parts. Since $\beta=O(q)$ in our applications,
increasing $q$ by a constant factor if necessary would then permit
degree-2 polynomials. With hashing, the new auxiliary coloring could be
reused for every hash function in the stage.
This would, however, not improve the asymptotic
round complexity: the smaller degree changes only constant factors in
the list-size requirement, and the additional color reductions fit
within the stated round bounds.
\end{remark}

\subsection{Degree Reduction without Hashing}
\label{sec:basic-degree}

We combine the algebraic procedure with the degree-reduction strategy
of Fraigniaud, Heinrich, and Kosowski~\cite{fraigniaud16local} and the
arbdefective decomposition of Barenboim, Elkin, and
Goldenberg~\cite{BEG18}. In each stage, we color nodes with sufficiently
large remaining lists so that the maximum uncolored degree is halved.

We are given a $(\deg+1)$-list-coloring instance with lists
$L_v\subseteq[C]$, where $C=O(\Delta)$. We maintain a proper partial
coloring and write $L'_v$ for the residual list of an uncolored node
$v$, obtained by removing the colors used by colored neighbors from
$L_v$. Whenever a node is colored, it announces its color to all
neighbors, which update their lists before the next call. These lists
only shrink and always contain more colors than the node has uncolored
neighbors, since each colored neighbor removes at most one color.
We use this convention throughout the remainder of the section.

Initially, every node is uncolored. Compute and retain a proper
auxiliary coloring with $\gamma=O(\Delta^2)$ colors by \cref{thm:linial}.
Fix $r=4$, let $q$ be the least power of two satisfying
\[
    C\leq q^2,\qquad \gamma\leq q^4,
\]
and initialize the degree bound
$\Delta'$ to the least power of two at least $\Delta$.

At degree bound $\Delta'$, we choose the outdegree bound $\beta$
so that the algebraic procedure can color nodes with at least
$\Delta'/2$ available colors. We process the parts sequentially,
always using the current residual lists. Nodes left uncolored then
have fewer than $\Delta'/2$ available colors, so their uncolored
degrees are also smaller than $\Delta'/2$.
We repeat the following stage while
$\Delta'\geq32q$.\footnote{In each algebraic call here and below,
activity and the polynomial translate are chosen locally from the node's
list. In round 1, active nodes send an activity flag and their polynomial
description; inactive nodes send an inactive flag. Active nodes avoid sets
only for active outneighbors in their part. In round 2, newly colored
nodes announce their actual colors to all neighbors.}

\begin{figure}[!t]
\begin{mdframed}[linewidth=0.4pt,nobreak=true]
\noindent\textbf{One degree-halving stage}
\begin{enumerate}[label=\arabic*.,leftmargin=*,itemsep=2pt]
    \item \textbf{Partition the uncolored graph.} Set
    \[
        \beta=\left\lfloor\frac{\Delta'}{16q}\right\rfloor
    \]
    and apply \cref{thm:partition} to the uncolored graph to obtain
    parts $V_1,\ldots,V_p$, each with an orientation of outdegree
    at most $\beta$.
    \item \textbf{Color nodes with large lists.} Process the parts in
    order. In part $V_i$, activate the uncolored nodes with
    $|L'_v|\geq\Delta'/2$ and apply \cref{lem:algebraic} to
    the graph induced by these nodes, with their residual lists
    and the retained auxiliary coloring.
    Announce the selected colors and update the residual lists
    before processing the next part.
    \item \textbf{Update the degree bound.} Replace $\Delta'$
    by $\Delta'/2$.
\end{enumerate}
\end{mdframed}
\end{figure}

Once the loop ends, set $d=\Delta'$. We compute an auxiliary
$2d$-coloring of the remaining graph in $O(\sqrt d+\log^* n)$
\CONGEST rounds, using Theorem~6.4 of the full version
of~\cite{BEG18}. This algorithm colors the parts of an arbdefective
partition using Barenboim's algebraic procedure from
\cref{lem:algebraic}, with parameters chosen for degree bound $d$ and temporary lists
$[2d]$. Every node retains at least $d$ available auxiliary colors,
so one pass through the parts suffices. We then process the $2d$
auxiliary color classes in order, choosing and announcing actual
colors from the residual lists $L'_v$.

\begin{theorem}[The basic algorithm]
\label{thm:basic}
For $C=O(\Delta)$, the above algorithm solves $(\deg+1)$-list coloring
deterministically in $O(\sqrt{\Delta}\log\Delta+\log^* n)$ rounds
in \CONGEST.
\end{theorem}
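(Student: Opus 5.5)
We verify correctness, namely that the invariant ``maximum uncolored degree $<\Delta'$'' is maintained, and then bound the round complexity.

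\textbf{Correctness.} Initially $\Delta'\geq\Delta$, so the invariant holds.

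Consider one stage with $\Delta'\geq 32q$. Then $\beta=\lfloor\Delta'/(16q)\rfloor\geq 2$, so \cref{thm:partition} applies to the uncolored graph, whose degree is at most $\Delta'$.

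In part $V_i$, an active node $v$ has $|L'_v|\geq\Delta'/2$. It has at most $\beta$ active outneighbors within its part. Since $r=4$,
\[
rq\beta\leq 4q\cdot\frac{\Delta'}{16q}=\frac{\Delta'}{4}<\frac{\Delta'}{2}.
\]
Hence \cref{lem:algebraic} applies to the subgraph induced by the active nodes. The retained auxiliary coloring restricted to this subgraph is still proper and satisfies $\gamma\leq q^4$. The restriction keeps the outdegree bound, by the remark after \cref{thm:partition}. The chosen colors come from the residual lists, so they do not conflict with previously colored neighbors. Parts are processed sequentially, so no two nodes colored in the same call lie in different parts. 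Thus the partial coloring stays proper.

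At the end of the stage, consider an uncolored node $v$. When its part was processed, it was inactive, so its list then had fewer than $\Delta'/2$ colors. Lists only shrink afterwards. The residual list size always exceeds the uncolored degree, so $v$ has fewer than $\Delta'/2$ uncolored neighbors. This establishes the invariant for the halved bound.

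After the loop we have $d=\Delta'<32q$. The auxiliary $2d$-coloring is proper, by the cited theorem. Processing its classes sequentially colors independent sets, and each node always has a nonempty residual list. So every node gets a valid color.

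\textbf{Round complexity.} Since $C=O(\Delta)$ and $\gamma=O(\Delta^2)$, we get $q=\Theta(\sqrt\Delta)$. The minimality of $q$ as a power of two gives $q=O(\sqrt{C}+\gamma^{1/4})$, and this matters because $C\geq\Delta+1$.

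In a stage with degree bound $\Delta'\geq 32q$, the partition takes
\[
O(\Delta'/\beta+\log^*\gamma)=O(q+\log^*\Delta)
\]
rounds. It uses the retained coloring, and $\beta\geq\Delta'/(32q)$ because of the floor. The parts then cost $p=O(\Delta'/\beta)=O(q)$ further parts, each taking two rounds. Hence a stage costs $O(\sqrt\Delta)$ rounds, since $\log^*\Delta=O(\sqrt\Delta)$.

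The number of stages is at most $\log(2\Delta)=O(\log\Delta)$, because $\Delta'$ starts below $2\Delta$ and halves each stage. The finishing phase costs $O(\sqrt d+\log^* n)+2d=O(\sqrt\Delta+\log^* n)$, using $d<32q=O(\sqrt\Delta)$. The initial Linial step costs $O(\log^* n)$.

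The total is therefore $O(\sqrt\Delta\log\Delta+\log^* n)$.

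\textbf{\CONGEST.} The messages are an activity flag, $\psi(v)$, and $t_v$. These take $O(\log\Delta)$ bits. Colors take $O(\log C)=O(\log\Delta)$ bits. Part memberships and orientations take $O(\log\Delta)$ bits. All of these are $O(\log n)$ bits.

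The main point requiring care is the handling of the floor in $\beta$ together with the bound $q=O(\sqrt\Delta)$. Both are routine, given $\Delta'\geq 32q$ and $C=O(\Delta)$.
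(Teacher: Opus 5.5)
Your proposal is correct and follows the paper's proof essentially step for step. It checks $4q\beta\leq\Delta'/4<\Delta'/2$ for \cref{lem:algebraic}, halves the uncolored degree by the shrinking-lists argument, and bounds each of the $O(\log\Delta)$ stages by $O(q+\log^*\Delta)=O(\sqrt\Delta)$ rounds with the same $O(\sqrt d+\log^* n+d)$ finish for $d<32q$, while handling the floor in $\beta$ more explicitly than the paper. The only slip is cosmetic: the invariant should read ``maximum uncolored degree at most $\Delta'$'' (which is what you actually use), since $\Delta'=\Delta$ is possible initially.
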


\begin{proof}
We first analyze a degree-halving stage. The field choice gives
$q=\Theta(\sqrt C)$ because $\gamma=O(\Delta^2)$ and
$C\geq\Delta+1$. The loop condition ensures
$\beta\geq1$ and $\beta=\Theta(\Delta'/q)$, so
\cref{thm:partition} applies and produces $p=O(q)$ parts.
For each active node, the hypothesis of \cref{lem:algebraic} holds:
\[
    rq\beta=4q\beta\leq\Delta'/4
        <\Delta'/2\leq |L'_v|.
\]
Each call therefore gives a proper coloring from the residual lists.
Since these exclude colors already used by colored neighbors, the
partial coloring remains proper.
Every node left uncolored had fewer than $\Delta'/2$ available colors
when its part was processed. Residual lists only shrink and remain
larger than the uncolored degrees, so the maximum uncolored degree
is less than $\Delta'/2$. This justifies the updated degree bound.

The decomposition and the two rounds per part take
\begin{equation}
\label{eq:basic-stage-time}
    O(q+\log^*\Delta)=O(\sqrt C)
\end{equation}
rounds per stage. There are $O(\log\Delta)$ stages, and when the
loop ends, the remaining graph has degree at most $d=O(q)$.

The auxiliary color classes for finishing are independent, and
every residual list remains nonempty. Processing these classes
gives a proper list coloring in at most $2d$ further rounds.
Computing the auxiliary coloring and processing its classes takes
$O(\sqrt d+\log^* n+d)=O(\sqrt C+\log^* n)$ rounds.
Together with the stages and the initial color reduction, the total is
$O(\sqrt\Delta\log\Delta+\log^* n)$ because $C=O(\Delta)$.
All messages contain auxiliary labels, field elements, or actual
colors, each representable in $O(\log n)$ bits.
\end{proof}

\paragraph{Stopping early.}
We can improve the running time by stopping the degree reduction
earlier and using an existing list-coloring algorithm on the remaining
graph. Fix a constant $a>0$ and set
$d=\lceil\Delta/\log^a\Delta\rceil$. Run the basic algorithm,
stopping its degree-halving loop as soon as $\Delta'\leq d$.
If the ordinary stopping condition is reached first, use the
finishing procedure above instead. To finish, choose a
sufficiently large fixed constant $a$ and apply the list-coloring
algorithm of Fuchs and Kuhn~\cite{FuchsK23} to the residual instance.
In \LOCAL, the algorithm of Maus and Tonoyan~\cite{MausTonoyan20}
can also finish the coloring.

\begin{corollary}[Combining with known list-coloring algorithms]
\label{cor:hybrid}
For $C=O(\Delta)$ and any fixed constant $a>0$, the above variant
computes a proper partial list coloring leaving maximum uncolored
degree at most $\lceil\Delta/\log^a\Delta\rceil$ in
$O(\sqrt{\Delta}\log\log\Delta+\log^* n)$ deterministic
\CONGEST rounds. For sufficiently large $a$, completing the coloring
as described above gives a full $(\deg+1)$-list coloring within the
same bound in both \LOCAL and \CONGEST.
\end{corollary}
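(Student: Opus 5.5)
The plan is to reuse the stage analysis from the proof of \cref{thm:basic} and only recount how many stages are executed before the early stop. Each degree-halving stage still costs $O(q+\log^*\Delta)=O(\sqrt C)=O(\sqrt\Delta)$ rounds by \eqref{eq:basic-stage-time}, since $C=O(\Delta)$. Correctness of every stage is unchanged. The hypothesis of \cref{lem:algebraic} holds for active nodes, so the partial coloring stays proper. Residual lists only shrink and stay larger than the uncolored degree, so after each stage the maximum uncolored degree is below the new $\Delta'$.

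For the count, note that $\Delta'$ starts at most $2\Delta$ and halves each stage. The loop stops as soon as $\Delta'\le d=\lceil\Delta/\log^a\Delta\rceil$. Hence the number of stages is at most
\[
    \log_2(2\Delta/d)+1\le \log_2(2\log^a\Delta)+1=O(a\log\log\Delta),
\]
which is $O(\log\log\Delta)$ for fixed $a$. With the initial $O(\log^* n)$-round application of \cref{thm:linial}, the partial coloring phase therefore takes $O(\sqrt\Delta\log\log\Delta+\log^* n)$ rounds. When the loop stops because $\Delta'\le d$, the uncolored degree is less than $\Delta'\le d$, which gives the claimed bound.

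If instead the ordinary condition $\Delta'<32q$ triggers first, I would invoke the finishing procedure from the proof of \cref{thm:basic}. It costs $O(\sqrt C+\log^* n)=O(\sqrt\Delta+\log^* n)$ rounds and colors everything, so the residual degree is $0$ and the bound again holds. In both branches the messages are the same as in \cref{thm:basic}, so \CONGEST is fine.

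For the completion step, the residual instance is again a $(\deg+1)$-list-coloring instance. It has maximum degree $\Delta_0\le\lceil\Delta/\log^a\Delta\rceil$ and color space $C=O(\Delta)$, which is at most $\Delta_0\operatorname{polylog}\Delta$. The main obstacle is to verify that the cited algorithms run fast enough on this instance with its still-large color space.
\begin{itemize}[leftmargin=*]
\item The bound of Fuchs and Kuhn~\cite{FuchsK23} is $\widetilde O(\sqrt{\Delta_0})+O(\log^* n)$ for $C\le\operatorname{poly}(\Delta_0)$, with polylogarithmic factors in the degree and the color space. These factors total at most $\log^{b}\Delta$ for some absolute constant $b$. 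Substituting $\Delta_0\le\Delta/\log^a\Delta$ gives $O(\sqrt\Delta\,\log^{b-a/2}\Delta+\log^* n)$, which is $O(\sqrt\Delta+\log^* n)$ once $a\ge 2b$.
\item In \LOCAL, Maus and Tonoyan~\cite{MausTonoyan20} give $O(\sqrt{\Delta_0\log\Delta_0}+\log^* n)$ rounds for color spaces up to $2^{\operatorname{poly}(\Delta_0)}$. This is already $O(\sqrt\Delta+\log^* n)$ for $a\ge1$.
\end{itemize}
I would check in the cited statements that the polylogarithmic dependence on $C$ really is polylogarithmic in $\Delta$, and that they accept arbitrary lists from $[C]$ with $|L'_v|>\deg$. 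Given that, adding the two phases yields the full bound.
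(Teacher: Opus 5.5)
Your proposal is correct and follows the paper's proof: $O(a\log\log\Delta)$ stages, each costing $O(\sqrt C)=O(\sqrt\Delta)$ rounds by \eqref{eq:basic-stage-time}, with the basic finishing procedure covering the case where the ordinary stopping condition is reached first, and then the Fuchs--Kuhn algorithm on the residual instance, whose $\sqrt{d}\,\operatorname{polylog}$ cost becomes $O(\sqrt\Delta+\log^* n)$ for a sufficiently large fixed $a$. One small tidy-up: as the paper does, run the completion algorithm with the known bound $d=\lceil\Delta/\log^a\Delta\rceil$ as its degree parameter, since $C=O(\Delta)=d^{O(1)}$ holds relative to $d$ but not necessarily relative to the actual, possibly much smaller, residual maximum degree $\Delta_0$.
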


\begin{proof}
For sufficiently large $\Delta$, the target degree is larger than
the $\Theta(\sqrt\Delta)$ threshold at which the basic loop stops.
Reaching it takes only $O(a\log\log\Delta)$ stages. The first
assertion follows from \eqref{eq:basic-stage-time}. Bounded values of $\Delta$ are covered by the finishing procedure above.

The residual instance is still a $(\deg+1)$-list-coloring instance,
since each colored neighbor removes at most one color. Its degree bound is $d$, and
its color space has size $C=O(\Delta)=d^{O(1)}$. The algorithm of
Fuchs and Kuhn~\cite{FuchsK23} takes
$\sqrt d\,\operatorname{polylog}d+O(\log^* n)$ \CONGEST rounds.
For a sufficiently large fixed $a$, this is
$O(\sqrt\Delta+\log^* n)$, proving the second assertion.
\end{proof}

\subsection{A Short Family of Hash Functions}
\label{sec:hashing}

In the basic degree-reduction scheme, applying the list-coloring
procedure of \cref{lem:algebraic} over a color space of size $C$ costs
$O(\sqrt C)$ rounds per stage. The field size remains
$q=\Theta(\sqrt C)$ even when $C\gg\Delta$, so this cost does not
decrease as the maximum uncolored degree $\Delta'$ decreases.
We use hashing to reduce the color space along with the degree.
For the general argument, we introduce an
\emph{image-preserving hash family}.

\begin{definition}[Image-preserving hash family]
\label{def:image-preserving-hashes}
Let $N,M,R,k$ be positive integers with $k\leq N$. A family of
hash functions $h_1,\ldots,h_R:[N]\to[M]$ is
\emph{image-preserving at threshold $k$} if every set
$S\subseteq[N]$ of size at least $k$ satisfies
\begin{equation}
\label{eq:hash-property}
    \max_{j\in[R]}|h_j(S)|>k/2.
\end{equation}
\end{definition}

We apply such families with $N=C$ and choose the threshold $k$
proportional to the current degree bound $\Delta'$. The construction in
\cref{lem:hashes} gives $M=8k=\Theta(\Delta')$, so $M=o(C)$ whenever
$\Delta'=o(C)$. Choosing $k=\Delta'/2$ allows each stage to halve the
maximum uncolored degree, as shown in \cref{lem:hash-stage}.
In each call, we color nodes whose image lists are sufficiently large
and choose preimages from their current residual lists. All participating
nodes use the same hash function, so adjacent nodes assigned different
image colors choose different original colors.

\begin{lemma}[Image-preserving hash family]
\label{lem:hashes}
For integers $N\geq k\geq2$ with $k$ even, there is an
image-preserving family at threshold $k$
(\cref{def:image-preserving-hashes}) consisting of
\[
    R=\lceil\log(N/k)\rceil+3
    \quad\text{hash functions}\quad h_1,\ldots,h_R:[N]\longrightarrow[8k].
\]
The family can be computed by a finite deterministic procedure using
only $N$ and $k$.
\end{lemma}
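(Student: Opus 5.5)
The plan is to use the probabilistic method with independent, uniformly random functions, and then turn the existence proof into a deterministic exhaustive search.

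\textbf{Reduction to sets of size exactly $k$.} First I reduce to the case $|S|=k$. Any $S$ with $|S|\geq k$ contains a subset $S'$ of size exactly $k$, and $h_j(S')\subseteq h_j(S)$ for every $j$. So if \eqref{eq:hash-property} holds for all $k$-subsets of $[N]$, it holds for all larger sets as well.

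\textbf{Failure probability of a single function.} Let $h:[N]\to[8k]$ be uniformly random and fix $S$ with $|S|=k$. If $|h(S)|\leq k/2$, then $h(S)$ lies inside some set $T\subseteq[8k]$ with $|T|=k/2$ (this uses that $k$ is even). A fixed $T$ contains all of $h(S)$ with probability $(1/16)^k$. A union bound over the choices of $T$ then gives
\[
    \Pr\bigl[|h(S)|\leq k/2\bigr]\leq\binom{8k}{k/2}16^{-k}\leq(16e)^{k/2}\,16^{-k}=(e/16)^{k/2}.
\]

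\textbf{Independence and union bound over sets.} I choose $h_1,\ldots,h_R$ independently. For a fixed $S$, the probability that all $R$ functions fail is then at most $(e/16)^{kR/2}$. There are at most $\binom{N}{k}\leq(eN/k)^k$ sets $S$ of size $k$. So the family fails with probability less than one as soon as
\[
    eN/k<(16/e)^{R/2}.
\]
This is where the choice $R=\lceil\log(N/k)\rceil+3$ enters. Since $16/e>4$, we have $(16/e)^{R/2}>2^R\geq 2^{\log(N/k)+3}=8N/k>eN/k$. Hence a valid family exists.

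The main obstacle is keeping the constants in these two estimates tight enough that $R$ stays at exactly $\lceil\log(N/k)\rceil+3$. The per-set failure bound must beat base $2$ per function, and the margin $16/e>4$ is what makes this work. If the crude binomial estimates were too lossy, I would tighten them, for example by bounding $\binom{8k}{k/2}$ more carefully.

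\textbf{Deterministic computation.} The set of all $R$-tuples of functions $[N]\to[8k]$ is finite. Whether a given tuple satisfies the property can be checked by enumerating all $k$-subsets of $[N]$. Every node enumerates the tuples in a fixed lexicographic order and outputs the first valid one. By the existence argument above this search terminates, and its result depends only on $N$ and $k$, so all nodes obtain the same family.
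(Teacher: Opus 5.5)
Your proof is correct and matches the paper's argument step for step: the same single-function bound $\binom{8k}{k/2}16^{-k}\le(e/16)^{k/2}$, independence of the $R$ functions plus a union bound over at most $(eN/k)^k$ sets, the reduction to $k$-element subsets, and the deterministic exhaustive search. The only cosmetic difference is the order of simplification: the paper first weakens $(e/16)^{k/2}$ to $2^{-k}$ and ends with $(e/8)^k<1$, whereas you keep the base $e/16$ and compare $(16/e)^{R/2}>2^R\ge 8N/k$ at the end, which is the same computation.
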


\paragraph{Construction.}
Enumerate all $R$-tuples of functions from $[N]$ to $[8k]$ in a fixed
order. Test each tuple on all $k$-element subsets of $[N]$ and return
the first tuple satisfying \eqref{eq:hash-property} on all of them.

\begin{proof}
Choose a random function $h:[N]\to[8k]$ by assigning independent uniform
images to all elements. Fix a $k$-element set $S$. If $|h(S)|\leq k/2$,
all images of $S$ belong to some $(k/2)$-element subset of $[8k]$.
Hence
\[
    \Pr\bigl[|h(S)|\leq k/2\bigr]
    \leq \binom{8k}{k/2}\left(\frac1{16}\right)^k
    \leq (16e)^{k/2}16^{-k}
    =\left(\frac e{16}\right)^{k/2}
    <2^{-k}.
\]
For $R$ independently chosen functions, the probability that every function
fails for this fixed set is less than $2^{-kR}$. A union bound over
all $k$-element sets gives
\[
    \Pr[\text{some $k$-element set fails for every function}]
    <\binom{N}{k}2^{-kR}
    \leq\left(\frac{eN}{k}\,2^{-R}\right)^k
    \leq(e/8)^k<1.
\]
Thus a suitable family exists for all $k$-element sets, and every
larger set contains such a set. The deterministic search therefore
terminates and returns a family satisfying \eqref{eq:hash-property}.
Since the enumeration and all tests depend only on $N$ and $k$,
every node obtains the same family.
\end{proof}

\subsection{Coloring with Compressed Color Spaces}
\label{sec:compressed}

We use image-preserving families to adapt the color space to the
current degree for general $C\geq\Delta+1$. As before, we retain a proper auxiliary
$O(\Delta^2)$-coloring and maintain residual lists $L'_v$ that form
a $(\deg+1)$-list-coloring instance on the uncolored graph.

Consider a stage in which the uncolored graph has maximum degree at
most $\Delta'$, where $\Delta'\geq2$ is a
power of two and $\Delta'/2\leq C$. Set $k=\Delta'/2$
and use a common image-preserving family $h_1,\ldots,h_R:[C]\to[M]$
at threshold $k$, computed locally from $C$ and $k$.
We also reduce the auxiliary color space. Apply
\cref{thm:linial} to the retained coloring on the uncolored graph to
obtain a proper coloring $\psi$ with at most $\kappa(\Delta')^2$
colors. Fix $r=4$, let $q\geq2$ be the least power of two satisfying
\[
    M\leq q^2,\qquad \kappa\cdot(\Delta')^2\leq q^4,
\]
and set $\beta=\lfloor\Delta'/(16q)\rfloor$. Assume $\beta\geq1$ and
apply
\cref{thm:partition} to obtain parts $V_1,\ldots,V_p$ with
outdegree-$\beta$ orientations. Keep these parts and orientations
fixed throughout the stage.

We process the parts sequentially. Within each part, we try all hash
functions in order, coloring nodes whose current lists have
sufficiently many distinct images. After each call, nodes announce
their actual colors to update the remaining lists before the next
hash function or part is processed.

\begin{mdframed}[linewidth=0.4pt,nobreak=true]
\noindent\textbf{One degree-halving stage with hashing}

Iterate over the parts $V_1,\ldots,V_p$ and for each $i=1,\ldots,p$,
try $h_1,\ldots,h_R$ in order on $V_i$
before proceeding to the next part.
For each pair $(i,j)\in [p]\times[R]$, execute the following:
\begin{enumerate}[label=\arabic*.,leftmargin=*,itemsep=2pt]
    \item \textbf{Select active nodes locally.} Activate the uncolored
    nodes $v\in V_i$ with $|h_j(L'_v)|>k/2$.
    \item \textbf{Color the image lists (round 1).} Apply \cref{lem:algebraic}
    to the graph induced by the active nodes, with lists
    $h_j(L'_v)\subseteq[M]$, using $\psi$ and the orientation of
    the part. Each active node $v$ obtains an image color $z_v$
    and chooses an actual color
    $c_v\in L'_v$ with $h_j(c_v)=z_v$.
    \item \textbf{Update the residual lists (round 2).} Permanently color each
    active node $v$ with $c_v$ and announce $c_v$ to all neighbors.
    Uncolored neighbors remove these actual colors from their lists
    before the next call.
\end{enumerate}
\end{mdframed}

\begin{lemma}[Degree halving with hashing]
\label{lem:hash-stage}
Let $h_1,\ldots,h_R:[C]\to[M]$ be an image-preserving hash family
at threshold $k=\Delta'/2$ in the sense of
\cref{def:image-preserving-hashes}.
If $\beta\geq1$, the stage above extends the proper partial coloring
and leaves maximum uncolored degree less than $\Delta'/2$ in
$O(R\sqrt M+\log^*\Delta)$ rounds. It works in \CONGEST when
$C,M\leq n^{O(1)}$.
\end{lemma}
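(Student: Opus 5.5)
The argument has three parts: correctness of each call, the degree bound at the end of the stage, and the round count.

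\emph{Correctness of a single call.} Fix a pair $(i,j)$. I will apply \cref{lem:algebraic} to the active nodes of $V_i$, using lists $h_j(L'_v)\subseteq[M]$, the coloring $\psi$ and the outdegree-$\beta$ orientation. By the choice of $q$ we have $M\leq q^2$ and $\kappa(\Delta')^2\leq q^4$, so the field requirements hold with $r=4$. For the list-size hypothesis, note $rq\beta=4q\lfloor\Delta'/(16q)\rfloor\leq\Delta'/4=k/2$. An active node satisfies $|h_j(L'_v)|>k/2\geq rq|N^+(v)|$, since restricting to active nodes only lowers outdegrees. Hence the image colors $z_v$ form a proper coloring of the active subgraph. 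Each $z_v$ has a preimage $c_v\in L'_v$ because $z_v\in h_j(L'_v)$.

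Properness of the actual colors then follows in two cases. Adjacent active nodes $u,v$ in $V_i$ have $z_u\neq z_v$ under the same function $h_j$, so $c_u\neq c_v$. An active node and an already colored neighbor differ because $c_v$ lies in the residual list $L'_v$. Nodes in other parts or calls are never active at the same time, so these two cases cover every edge.

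\emph{Degree bound.} This is the main step, and it uses the image-preserving property. Suppose an uncolored node $v\in V_i$ ends the stage with $|L'_v|\geq k$. By \eqref{eq:hash-property} some $h_j$ satisfies $|h_j(L'_v)|>k/2$ for the final list. Lists only shrink, so during call $(i,j)$ the list $L'_v$ was a superset of its final value and its image under $h_j$ was at least as large. Then $v$ was active in that call and was colored, a contradiction. Thus every uncolored node has $|L'_v|<k$. Since residual lists always exceed the uncolored degree, the maximum uncolored degree is less than $k=\Delta'/2$.

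\emph{Round count.} I take $M$ as at most polynomial in $\Delta'$, or more precisely as the bound stated in the lemma. Since $q$ is the least power of two with $q^2\geq M$ and $q^4\geq\kappa(\Delta')^2$, we get $q=O(\sqrt M+\sqrt{\Delta'})$. Computing $\psi$ by \cref{thm:linial} from the retained $O(\Delta^2)$-coloring costs $O(\log^*\Delta)$ rounds. The partition from \cref{thm:partition} gives $p=O(\Delta'/\beta)=O(q)$ parts in $O(q+\log^*\Delta)$ rounds. The calls then take $2pR=O(Rq)$ rounds.

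The delicate point is bounding $q$ by $O(\sqrt M)$. This requires $\sqrt{\Delta'}=O(\sqrt M)$, i.e.\ $M=\Omega(\Delta')$. For families from \cref{lem:hashes} this holds because $M=8k$. In general, the image-preserving property forces $M>k/2$, so $M=\Omega(\Delta')$ always holds. With that, the total is $O(R\sqrt M+\log^*\Delta)$.

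\emph{\CONGEST.} Messages carry activity flags, labels of $\psi$ ($O(\log\Delta)$ bits), field elements ($O(\log q)=O(\log M)$ bits) and actual colors ($O(\log C)$ bits). All of these are $O(\log n)$ bits when $C,M\leq n^{O(1)}$.
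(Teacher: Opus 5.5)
Your proposal is correct and follows the paper's proof essentially step by step. The matching steps are the list-size check $4q\beta\leq\Delta'/4=k/2$, properness via a common hash function together with residual lists, the contradiction argument using that lists only shrink, and $q=\Theta(\sqrt M)$ from the bound $M>k/2$ forced by the image-preserving property. Your aside that $M$ is at most polynomial in $\Delta'$ is not given by the hypotheses, but it is also not needed, since, as you then observe, only $M>k/2$ is used to get $q=O(\sqrt M)$.
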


\begin{proof}
The image guarantee implies $M>k/2=\Delta'/4$, so the field choice
gives $q=\Theta(\sqrt M)$. Since $\beta\geq1$, we have
$\beta=\Theta(\Delta'/q)$, and \cref{thm:partition} produces
$p=O(q)$ parts. The hypotheses of
\cref{lem:algebraic} hold in every call, since $q$ accommodates both
the image colors and the auxiliary colors, and each active node has
\[
    |h_j(L'_v)|>k/2=\Delta'/4\geq4q\beta.
\]
Adjacent nodes colored in the same call receive different images
under the same function and hence
choose different actual colors. These colors also avoid all previously
colored neighbors, so each call extends the proper partial coloring.

To see that the degree is halved, suppose a node $v$ remains uncolored
with a final residual list $S$ of size at least $k$.
By \eqref{eq:hash-property}, some function $h_j$ satisfies
$|h_j(S)|>k/2$. When $h_j$ was tried in $v$'s part, its
residual list still contained $S$, since lists only shrink. Thus, $v$
would have been active and colored, a contradiction. Every remaining
list has size less than $k$, so the maximum uncolored degree is less
than $k=\Delta'/2$.

The auxiliary color reduction and decomposition take
$O(q+\log^*\Delta)$ rounds. Each part uses $R$ two-round calls,
so processing all $p$ parts takes $O(pR)=O(Rq)$ rounds.
Messages contain only activity flags, auxiliary colors, field
elements, or actual colors, and have $O(\log C+\log M)$ bits.
Hash tables and lists are computed locally and are never transmitted.
\end{proof}

\paragraph{The complete algorithm.}
We use the family of \cref{lem:hashes}, with $M=8k=4\Delta'$ and
$R=O(\log(1+C/\Delta'))$. Thus $q=\Theta(\sqrt{\Delta'})$, and
$\beta\geq1$ whenever $\Delta'$ is at least a sufficiently large absolute
constant. By \cref{lem:hash-stage}, each stage then takes
\[
    O\left(\sqrt{\Delta'}\log\left(1+\frac C{\Delta'}\right)
        +\log^*\Delta\right)
\]
rounds.
Initially, every node is uncolored. Compute and retain a proper
auxiliary $O(\Delta^2)$-coloring by \cref{thm:linial}. Let
$\Delta'_0$ be the least power of two at least $\Delta$, and
set $\Delta'=\Delta'_0$. Repeat the stage above, replacing
$\Delta'$ by $\Delta'/2$ after each stage, until it falls
below this constant threshold. On the remaining graph, use
\cref{thm:linial} with degree bound $\Delta'$ and the retained
auxiliary coloring. Process the resulting constant number of color
classes in order, choosing and announcing an available actual color
at each node.

\begin{proof}[Proof of \cref{thm:main}]
Initially, the maximum degree is at most $\Delta'_0<2\Delta$.
Every executed stage has even $k=\Delta'/2\geq2$ and
$k\leq\Delta'_0/2\leq\Delta<C$. Thus,
\cref{lem:hash-stage} applies and justifies each update of the degree
bound. The final color classes are independent and their residual
lists are nonempty, so the finishing procedure colors all remaining
nodes. This also covers bounded values of $\Delta$ for which no stage
is executed.

At stage $i$, the degree bound is
$\Delta'_i=\Delta'_0/2^i$. Extending the sum to all
$i\geq0$, the degree-dependent part of the running time is at most
\begin{align*}
    \sum_{i\geq0}O\left(\sqrt{\Delta'_i}
        \log\left(1+\frac C{\Delta'_i}\right)\right)
    &\leq O(\sqrt\Delta)\sum_{i\geq0}2^{-i/2}
        \left(\log\left(1+\frac C\Delta\right)+i\right)\\
    &=O\left(\sqrt\Delta\log\left(1+\frac C\Delta\right)\right).
\end{align*}
Both $\sum_i2^{-i/2}$ and $\sum_i i2^{-i/2}$ converge.
The repeated auxiliary color reductions add at most
\[
    O\bigl(\log\Delta\,\log^*\Delta\bigr)
    =O(\sqrt\Delta)
\]
rounds, which are absorbed by the same bound. The final graph has
constant degree, so its auxiliary color reduction and greedy
completion take $O(\log^*\Delta)$ rounds. Adding the initial
$O(\log^* n)$ rounds proves the claimed complexity.

All messages have size $O(\log n)$ when $C\leq n^{O(1)}$, so the
algorithm works in \CONGEST. For arbitrary
finite $C$, it works in \LOCAL. All choices are deterministic, and
the search of \cref{lem:hashes} gives every node the same hash family
from $C$ and $k$.
\end{proof}

\section*{AI Disclosure}
\addcontentsline{toc}{section}{AI Disclosure}
The main ideas of this paper were developed by the authors. We used
ChatGPT (5.6sol, 6 Astra, and 6.1sol) to further explore these ideas.
The large language models were particularly useful in defining the hash
families. We also used them to assist with the writing process.
All formulations and proofs were carefully checked by the authors.

\clearpage
\setlength{\parskip}{0pt plus 1pt}
\bibliographystyle{plainnat}
\let\originalbibsection\bibsection
\renewcommand{\bibsection}{%
    \originalbibsection
    \addcontentsline{toc}{section}{\refname}%
}
\bibliography{references}

\clearpage
\appendix
\section{Explicit Hash Functions}
\label{sec:explicit-hashing}

The hash family of \cref{lem:hashes} has a range of size $O(k)$,
but finding it by exhaustive search can require substantial local
computation. In this appendix, we replace this search by the explicit
lossless expanders
of Capalbo, Reingold, Vadhan, and Wigderson~\cite{CRVW2002}.
Both the number of functions and the ratio between the range size
and $k$ now grow slowly with $C/k$. The resulting bound is
weaker for large color spaces, but gives the same round complexity
as \cref{thm:main} when $C=O(\Delta)$.

\begin{theorem}[Coloring with explicit hash functions]
\label{thm:explicit-coloring}
The $(\deg+1)$-list-coloring problem with colors from $[C]$, where
$C\geq\Delta+1$, can be solved deterministically in
\[
    \sqrt\Delta\,
    2^{O((\log\log(4+C/\Delta))^3)}+O(\log^* n)
\]
rounds in \LOCAL, and in \CONGEST when $C\leq n^{O(1)}$.
The algorithm uses explicit hash functions, each of which can be
evaluated in time polynomial in $\log C$ from the common parameters
and its index.
In particular, for $C=O(\Delta)$, the round complexity is
$O(\sqrt\Delta+\log^* n)$.
\end{theorem}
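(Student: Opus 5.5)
The plan is to replace \cref{lem:hashes} by an explicit family built from the lossless expanders of Capalbo, Reingold, Vadhan, and Wigderson, and then to rerun the stage analysis of \cref{lem:hash-stage} with a range $M$ and a number $R$ of functions that both depend on $C/k$. The analysis in \cref{lem:hash-stage} only uses the image-preserving property at threshold $k=\Delta'/2$ and the resulting bound $O(R\sqrt M+\log^*\Delta)$. The whole proof therefore reduces to an explicit family with $R\sqrt{M/k}\leq 2^{O((\log\log(4+C/k))^3)}$ and $M\leq n^{O(1)}$ whenever $C\leq n^{O(1)}$.

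\textbf{The hash family.} I would take an explicit left-regular bipartite graph with left side $[N]=[C]$, right side $[M]$, and left degree $D$, which is a $(K_{\max},\varepsilon)$-lossless expander for $K_{\max}=k$ and $\varepsilon=1/4$. Concretely, every left set $S$ of size $K\leq k$ has at least $(1-\varepsilon)DK$ neighbours. The $j$-th hash function $h_j$ maps $x$ to its $j$-th neighbour, so $R=D$. For a list of size at least $k$, I restrict to a subset $S$ of size exactly $k$. Each right vertex $y$ receives $d_y\geq1$ edges from $S$, and $\sum_y(d_y-1)\leq DK-(1-\varepsilon)DK=\varepsilon Dk$. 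The number of distinct images under $h_j$ is $k$ minus the collisions in slot $j$. Summing these collisions over $j$ gives at most $\varepsilon Dk$, because each collision in slot $j$ is charged to an extra edge at some right vertex. By averaging, some $j$ therefore has at most $\varepsilon k$ collisions, so $|h_j(S)|\geq(1-\varepsilon)k>k/2$. This is exactly \eqref{eq:hash-property}.

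\textbf{Parameters.} CRVW give, for ratio $N/K$, degree $D=2^{O((\log\log(N/K))^3)}$ and right side $M=O(DK)$ (for constant $\varepsilon$). I would quote their theorem in this form, up to padding $N$ to a power of two. This yields $R\sqrt{M}=D^{3/2}\sqrt{k}\cdot O(1)=\sqrt{\Delta'}\,2^{O((\log\log(4+C/\Delta'))^3)}$. Evaluation takes time polynomial in $\log C$, by explicitness.

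\textbf{Plugging into the stages.} With $M=O(Dk)$ the field has $q=\Theta(\sqrt{M})$, and I would set $\beta=\lfloor\Delta'/(16q)\rfloor$. Then $\beta$ may be smaller than before, and may even be $0$ once $D$ is large relative to $\Delta'$. This is the main obstacle. I would stop the degree-halving loop once $\Delta'<c\,D(\Delta')$ for a suitable constant $c$, and then finish the remaining graph via \cref{thm:linial} and greedy class processing. The remaining degree is $d=2^{O((\log\log(4+C/\Delta))^3)}$. Processing $O(d^2)$ classes costs $d^2$, which is still of the form $2^{O((\log\log)^3)}\le\sqrt\Delta\,2^{O(\cdot)}$. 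Alternatively, I could use the $O(\sqrt d+\log^*n)$ finisher of \cref{thm:basic} with lists over $[C]$, but this only works if $C$ is small.

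For the total running time, I sum over stages $\sqrt{\Delta'_i}\,2^{O((\log\log(4+2^iC/\Delta))^3)}$. The exponent grows only polylogarithmically in $i$, so the geometric factor $2^{-i/2}$ dominates and the sum is $\sqrt\Delta\,2^{O((\log\log(4+C/\Delta))^3)}$. The auxiliary reductions add $O(\log\Delta\log^*\Delta)$ rounds, as in the proof of \cref{thm:main}. For $C=O(\Delta)$ the factor is constant, which gives $O(\sqrt\Delta+\log^*n)$. Messages are unchanged from \cref{lem:hash-stage}, so \CONGEST holds.
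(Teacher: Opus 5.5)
Your proposal follows the paper's route. You use the neighbor functions of a CRVW lossless expander as the hash family, and your collision count is equivalent to the paper's bound $\sum_j|h_j(S)|\geq|E(S,[R])|\geq\tfrac34Rk$, so some $h_j$ has at least $3k/4$ images. You then apply \cref{lem:hash-stage} while $\beta\geq1$, finish with Linial plus greedy class processing once $\beta=0$, and sum a series whose exponent grows only polylogarithmically in $i$.

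One intermediate claim is wrong: the remaining degree is not $d=2^{O((\log\log(4+C/\Delta))^3)}$. At the stopping point you only know $d<c\,D(d)$ with $D(d)=2^{O((\log\log(4+C/d))^3)}$. Here $C/d$ can be nearly as large as $C$, not $C/\Delta$. For $C=\Theta(\Delta)$ this only gives $d=2^{O((\log\log\Delta)^3)}$, not $O(1)$.

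The conclusion still holds, for either of two reasons. First, $d<c\,D(d)$ gives $d^2\leq c^{3/2}\sqrt d\,D(d)^{3/2}$, which is exactly the series term of the final stage. This is how the paper argues, via $\Delta'\leq K\Phi$. Alternatively, $2^{O((\log\log\Delta)^3)}=O(\sqrt\Delta)$.
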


\subsection{From Lossless Expanders to Hash Functions}
\label{sec:explicit-family}

We use a bipartite graph with $N$ left vertices, $M$ right vertices,
and left degree $R$. Let $E(x,j)$ denote the $j$th neighbor of a
left vertex $x$. For sets $A\subseteq[N]$ and $B\subseteq[R]$, define
$E(A,B):=\{E(x,j):x\in A,\ j\in B\}$.
A lossless expander guarantees that small sets of
left vertices have many distinct neighbors. We will use the neighbor
functions $x\mapsto E(x,j)$ as hash functions. The expander is only used
to define these functions; it is unrelated to the communication graph.

\begin{theorem}[Explicit lossless expanders~\cite{CRVW2002}]
\label{thm:crvw}
There are absolute constants $a_0>0$ and $0<c\leq1$ such that, for
powers of two $2\leq M\leq N$, there is an explicit function
$E:[N]\times[R]\to[M]$ with
\[
    R\leq 2^{a_0(\log(2+\log(N/M)))^3}
\]
and
\begin{equation}
\label{eq:lossless-expansion}
    |E(S,[R])|\geq\tfrac34 R|S|
    \qquad\text{whenever }|S|\leq cM/R.
\end{equation}
Given $N,M,x,j$, the value $E(x,j)$ can be computed in time
polynomial in $\log N$.
\end{theorem}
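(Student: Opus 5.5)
This statement is imported from Capalbo, Reingold, Vadhan, and Wigderson~\cite{CRVW2002}, so I would not reprove the zig-zag construction. The plan is to instantiate their main theorem on explicit lossless expanders (equivalently, lossless conductors) and translate its parameters into the form used here. In the form I intend to cite, their result says the following. For all $N\geq M$ and every error $\varepsilon>0$, there is an explicit bipartite graph with $N$ left vertices, $M$ right vertices, and left degree
\[
    D=2^{O((\log(\log(N/M)/\varepsilon))^3)}
\]
in which every left set $S$ with $|S|\leq\Omega(\varepsilon M/D)$ has at least $(1-\varepsilon)D|S|$ distinct neighbors. Neighbors are computable in time polynomial in $\log N$ and $\log(1/\varepsilon)$. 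If their bound is instead stated in terms of $\log\log N$ or $\log(N/M)$, I would use whichever form applies; only the $\log(N/M)$ version is needed here.

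First, fix $\varepsilon=1/4$ and set $R:=D$. The expansion condition becomes $|E(S,[R])|\geq\tfrac34R|S|$, which is \eqref{eq:lossless-expansion}. The admissible set size becomes $|S|\leq cM/R$ for some absolute $c\in(0,1]$, where $c$ absorbs the constant hidden in $\Omega(\varepsilon M/D)$ at $\varepsilon=1/4$. Second, handle the degree bound uniformly, including the degenerate range $M$ close to $N$ where $\log(N/M)\in[0,1)$ makes $\log\log(N/M)$ meaningless or negative. Writing the exponent as $a_0(\log(2+\log(N/M)))^3$ makes the base of the outer logarithm at least $2$, so the exponent is at least $a_0$. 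Choosing $a_0$ large enough then dominates both the $O(\cdot)$ constant and the additive constant coming from $\varepsilon=1/4$. Third, realize the graph as a function $E:[N]\times[R]\to[M]$ by fixing an ordering of each vertex's neighbors. If the construction has slightly different vertex counts, such as powers of the base graph's size, I would embed $[N]$ into a larger left side and merge right vertices via a fixed $2$-to-$1$ or $O(1)$-to-$1$ map onto $[M]$. Merging right vertices can shrink $|E(S,[R])|$ by a constant factor, so I would run the construction with a smaller error, say $\varepsilon=1/16$, to leave slack for the final $\tfrac34$ bound. Restricting the left side only helps. Because $M$ and $N$ are powers of two, these adjustments change $\log(N/M)$ by $O(1)$. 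The computation-time claim is inherited directly, since the zig-zag product is evaluated by composing a polylogarithmic number of small explicit graphs.

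I expect the main obstacle to be the bookkeeping of matching CRVW's stated parameter regime exactly. Their theorem is phrased for conductors with specific input and output lengths and a min-entropy threshold. The step to get right is converting the statement ``every source of min-entropy up to $k_{\max}$ is mapped $\varepsilon$-close to min-entropy $k+\log D$'' into vertex expansion for sets of size up to $cM/R$, together with the size and power-of-two adjustments above, while keeping the degree bound in the form $2^{O((\log\log(N/M))^3)}$ rather than a weaker dependence on $\log N$. If only the $\log N$ form is available in the cited statement, I would first apply a fixed explicit compression of $[N]$, and I expect this to be the point needing the most care.
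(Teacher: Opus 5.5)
Your main line matches the paper's: instantiate CRVW at $\varepsilon=1/4$, set $R=2^d$, and absorb constants into $a_0$ and $c$. The paper cites the conductor form, Theorem~7.3 of~\cite{CRVW2002}. It gives $E:\{0,1\}^n\times\{0,1\}^d\to\{0,1\}^m$ with $d=O((\log(4t))^3)$ for $t=n-m$ and $k_{\max}=m-d-2-O(1)$. That form already depends on $n-m$ and already outputs exactly $m$ bits, so neither your vertex-count adjustments nor the ``compression of $[N]$'' are needed.

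The step you call the main obstacle is short, and you should actually carry it out. Choose $b$ with $k_{\max}\geq m-d-b$ and set $c=2^{-b}$. Take $X$ uniform on $S$; its min-entropy is $\log|S|\leq m-d-b\leq k_{\max}$. Take $J$ uniform on $[R]$. Then $E(X,J)$ is $\tfrac14$-close to some $Q$ with $Q(y)\leq 1/(R|S|)$. Since $E(X,J)\in T=E(S,[R])$ with probability one, $\tfrac34\leq Q(T)\leq |T|/(R|S|)$. For $N=M$ the paper simply uses the identity with $R=1$. Rewriting the exponent fixes the degree formula there, but you still need a construction; the identity, or your embedding of $[N]$ into $[2N]$, works.

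One fallback in your plan would fail: merging right vertices through a fixed $2$-to-$1$ (or any $O(1)$-to-$1$) map. Suppose the neighbors of a set, or even of a single left vertex, fall pairwise into common fibers. Then the merged neighborhood has size about $(1-\varepsilon)R|S|/2<\tfrac34R|S|$, whatever $\varepsilon$ is. So lowering $\varepsilon$ to $1/16$ creates no slack. If the right side ever had to be adjusted, the correct move is the opposite one. Build the expander on a right side $M_1\leq M$ with $M_1\geq M/O(1)$, and embed $[M_1]$ into $[M]$. This preserves the expansion factor exactly and costs only a constant factor in the admissible set size, which is absorbed into $c$.
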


\begin{proof}
We use the lossless-condenser/expander correspondence of Ta-Shma,
Umans, and Zuckerman~\cite[Lemma~2.2.1]{TaShmaUmansZuckerman2001},
also stated after Definition~3.4 in~\cite{CRVW2002}.
We spell out the needed direction.
For $N=M$, take $R=1$ and $E(x,1)=x$.
Otherwise, put $n=\log N$, $m=\log M$, and
$t=n-m=\log(N/M)\geq1$.
Theorem~7.3 of~\cite{CRVW2002}, with $\varepsilon=1/4$, gives
a lossless conductor
$E:\{0,1\}^n\times\{0,1\}^d\to\{0,1\}^m$ with
\[
    d=O((\log(4t))^3)=O((\log(t+2))^3),\qquad
    k_{\max}=m-d-2-O(1).
\]
Identify the bit strings with $[N]$, $[R]$, and $[M]$, where $R=2^d$.
The bound on $d$ gives the claimed degree bound for an absolute $a_0$,
and evaluation takes $\operatorname{poly}(n)=\operatorname{poly}(\log N)$ time.
Choose an absolute constant $b\geq0$ such that $k_{\max}\geq m-d-b$,
and set $c=2^{-b}$. Then $\log(cM/R)=m-d-b\leq k_{\max}$.

Let $S\subseteq[N]$ be nonempty with $|S|\leq cM/R$.
Take $X$ uniform on $S$ and $J$ independent and uniform on $[R]$.
The min-entropy of $X$ is $k=\log|S|\leq k_{\max}$.
By the lossless-conductor property (Definition~3.4 in~\cite{CRVW2002}),
the distribution of $E(X,J)$ is within statistical distance $1/4$
of a distribution $Q$ on $[M]$ with
$Q(y)\leq2^{-(k+d)}=1/(R|S|)$ for every $y\in[M]$.\footnote{Statistical
distance at most $1/4$ means that the
probabilities assigned to any event differ by at most $1/4$:
$|\Pr[E(X,J)\in A]-Q(A)|\leq1/4$ for every $A\subseteq[M]$,
where $Q(A)=\sum_{y\in A}Q(y)$.}
Since $E(X,J)$ lies in $T:=E(S,[R])$ with probability one,
\[
    \tfrac34\leq Q(T)=\sum_{y\in T}Q(y)\leq\frac{|T|}{R|S|}.
\]
This proves \eqref{eq:lossless-expansion}; the empty set satisfies it trivially.
\end{proof}

For every positive integer $a$, define, for $t\geq1$,
\begin{equation}
\label{eq:explicit-overhead}
    F_a(t):=2^{a\lceil\log\log(4+t)\rceil^3}.
\end{equation}
For the constants $a_0$ and $c$ from \cref{thm:crvw}, we can choose
an absolute positive integer $a$ such that
\begin{equation}
\label{eq:explicit-overhead-choice}
    2^{a_0(\log(3+\log t))^3}\leq cF_a(t)
    \qquad\text{for all }t\geq1.
\end{equation}
Such a choice exists because the two cubed expressions differ by
at most a constant factor and are bounded away from zero.
The ceiling makes $F_a(t)$ an integer, and for each fixed $a$,
$F_a(t)=2^{O((\log\log(4+t))^3)}$.

\begin{lemma}[Explicit image-preserving hash family]
\label{lem:explicit-hashes}
There is an absolute positive integer $a$ such that, for all integers
$N\geq2$ and $1\leq k\leq N$, there is an explicitly computable
image-preserving family at threshold $k$
(\cref{def:image-preserving-hashes}) of functions
$h_1,\ldots,h_R:[N]\to[M]$ with
\[
    R\leq F_a(N/k),\qquad
    k\leq M\leq 2kF_a(N/k),\qquad M<2N.
\]
Each value $h_j(x)$ can be computed in time
polynomial in $\log N$.
\end{lemma}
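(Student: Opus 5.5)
The plan is to take the hash functions to be the neighbor functions $h_j(x):=E(x,j)$ of the explicit lossless expander from \cref{thm:crvw}, with $N$ rounded up to a power of two and $M$ chosen as a power of two near $kF_a(N/k)$. Fix the absolute integer $a$ from \eqref{eq:explicit-overhead-choice}. Write $t=N/k\geq1$ and $F=F_a(t)$. Let $\hat N$ be the least power of two with $\hat N\geq N$, so that $\hat N<2N$. We view $[N]\subseteq[\hat N]$ and restrict the functions to $[N]$.

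\emph{Case $kF\geq N$.} We use a single function: $R=1$, $M=\hat N$, and $h_1$ the inclusion $[N]\to[M]$. Then every $S$ with $|S|\geq k$ has $|h_1(S)|=|S|\geq k>k/2$. The size bounds are immediate: $k\leq N\leq M<2N\leq 2kF$ and $R=1\leq F$.

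\emph{Case $kF<N$.} Let $M$ be the least power of two with $M\geq kF$. Then $M<2kF<2N$, and $M\leq\hat N$ because $\hat N$ is a power of two exceeding $kF$. Also $M\geq k$ and $M\geq2$, since $F\geq2$. Apply \cref{thm:crvw} to $\hat N$ and $M$, and set $h_j(x)=E(x,j)$ for $j\in[R]$.

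The key numerical step is bounding $R$. We have $\log(\hat N/M)\leq\log(2N/k)=1+\log t$. Hence $R\leq 2^{a_0(\log(3+\log t))^3}\leq cF$ by \eqref{eq:explicit-overhead-choice}. In particular $R\leq F$, and also $kR/c\leq kF\leq M$, that is, $k\leq cM/R$.

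For the image-preserving property, let $|S|\geq k$ and choose $S'\subseteq S$ with $|S'|=k\leq cM/R$. Lossless expansion \eqref{eq:lossless-expansion} gives
\[
    \sum_{j\in[R]}|h_j(S')|\geq|E(S',[R])|\geq\tfrac34Rk .
\]
So some $j$ satisfies $|h_j(S)|\geq|h_j(S')|\geq 3k/4>k/2$. Explicitness follows from \cref{thm:crvw}, since evaluation takes time polynomial in $\log\hat N=O(\log N)$.

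The main obstacle is the parameter bookkeeping, not the expansion argument. Rounding $N$ and $M$ to powers of two shifts $\log(N/M)$ by up to one, which is why the constant $3$ (rather than $2$) appears in \eqref{eq:explicit-overhead-choice}. The case $kF\geq N$ must be split off because \cref{thm:crvw} requires $M\leq\hat N$ while we also need $M<2N$. I would check these inequalities carefully first.
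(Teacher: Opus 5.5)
Your proposal is correct and follows the paper's route: round $N$ up to a power of two, cap $M$ at that power of two and otherwise take a power of two near $kF_a(N/k)$ (using the identity map in the degenerate case), bound $R$ via \eqref{eq:explicit-overhead-choice} to get $k\leq cM/R$, and average the lossless expansion over the $R$ neighbor functions. Your case split on $kF\geq N$ differs from the paper's split on whether $M=N'$ only cosmetically, since the overlap is covered by the $R=1$ case of \cref{thm:crvw}. Your check that $F\geq2$, which guarantees $M\geq2$, is a detail the paper leaves implicit.
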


\paragraph{Construction.}
Choose $a$ satisfying \eqref{eq:explicit-overhead-choice}.
Let $N'$ be the least power of two at least $N$, and let $M$ be the
smaller of $N'$ and the least power of two at least $kF_a(N/k)$.
If $M=N'$, use the single function $h_1(x)=x$.
Otherwise, construct the function $E$ of \cref{thm:crvw} for these
$N',M$, embed $[N]$ into $[N']$, and set
\[
    h_j(x)=E(x,j),\qquad j\in[R].
\]
The construction depends only on $N$ and $k$, so every node obtains
the same family.

\begin{proof}
The choice of $M$ gives all claimed range bounds, including
$M\leq N'<2N$. If $M=N'$, the image guarantee is immediate.
Otherwise, $M\geq kF_a(N/k)\geq k$ and $N'<2N$, so
\cref{thm:crvw} and \eqref{eq:explicit-overhead-choice} give
\[
    R\leq 2^{a_0(\log(3+\log(N/k)))^3}
        \leq cF_a(N/k)\leq F_a(N/k).
\]
Consequently, $cM/R\geq k$. For every $k$-element set $S$,
\[
    \sum_{j=1}^R|h_j(S)|
        \geq\left|\bigcup_{j=1}^R h_j(S)\right|
        =|E(S,[R])|\geq\tfrac34 Rk.
\]
Some image therefore has size at least $3k/4>k/2$. A larger set
contains a $k$-element subset and inherits the guarantee.
Each function is evaluated using the neighbor algorithm of
\cref{thm:crvw}, without storing a table or searching over families.
\end{proof}

\subsection{The Coloring Algorithm}
\label{sec:explicit-algorithm}

Fix $a$ as in \cref{lem:explicit-hashes}.
We use the setup of \cref{sec:compressed} with the family of
\cref{lem:explicit-hashes}, taking $N=C$ and $k=\Delta'/2$.
Compute $\psi$, $q$, and $\beta$ as there. If $\beta\geq1$, execute
the same degree-halving stage, trying all hash functions within each
part before proceeding to the next part. If $\beta=0$, process the auxiliary
color classes of $\psi$ in order, choosing and announcing an available actual color
at each node. This colors all remaining nodes and ends the algorithm.

\begin{lemma}[Degree halving with explicit hashing]
\label{lem:explicit-stage}
The stage above extends the proper partial coloring and either leaves
maximum uncolored degree less than $\Delta'/2$ or colors all
remaining nodes, in
\[
    O\left(\sqrt{\Delta'}\,
        F_a(2C/\Delta')^{3/2}+\log^*\Delta\right)
\]
rounds. It works in \CONGEST when $C\leq n^{O(1)}$.
\end{lemma}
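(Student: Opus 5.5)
The plan is to reduce to \cref{lem:hash-stage} when $\beta\geq1$, and to handle $\beta=0$ directly.

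\textbf{Case $\beta=0$.} The auxiliary coloring $\psi$ has at most $\kappa(\Delta')^2$ colors and is proper on the uncolored graph, so its classes are independent sets. Processing them in order lets each node pick a color from its residual list; the list is nonempty because it exceeds the uncolored degree. Properness is preserved because same-class nodes are nonadjacent and earlier colors have been removed from the lists. This takes $O(\kappa(\Delta')^2)$ rounds. Since $\beta=\lfloor\Delta'/(16q)\rfloor=0$ means $\Delta'<16q$, the key step is to bound $q$. With $M\leq 2kF_a(N/k)=\Delta'F_a(2C/\Delta')$, minimality of $q$ gives
\[
q^2=O\bigl(M+\Delta'\bigr)=O\bigl(\Delta'F_a(2C/\Delta')\bigr),
\]
since $q^4\geq\kappa(\Delta')^2$ only forces $q^2=\Theta(\Delta')$. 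Hence $\Delta'<16q$ gives $\Delta'=O(q)$, and with $q^2=O(\Delta' F)$ this yields $\Delta'=O(F)$, where $F=F_a(2C/\Delta')$. The class-by-class coloring then costs $O((\Delta')^2)=O(\sqrt{\Delta'}\,(\Delta')^{3/2})=O(\sqrt{\Delta'}F^{3/2})$ rounds, plus $O(\log^*\Delta)$ for computing $\psi$.

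\textbf{Case $\beta\geq1$.} Here I apply \cref{lem:hash-stage} with the family from \cref{lem:explicit-hashes}. This needs $N=C\geq k=\Delta'/2$ and $N\geq2$, which hold as in the main proof. The lemma gives correctness, degree halving, and a running time of $O(R\sqrt M+\log^*\Delta)$. Substituting $R\leq F$ and $M\leq\Delta'F$ gives $R\sqrt M\leq F\sqrt{\Delta' F}=\sqrt{\Delta'}F^{3/2}$, which is the claimed bound.

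One detail to check is that the proof of \cref{lem:hash-stage} uses $q=\Theta(\sqrt M)$ and $p=O(q)$. The lower bound $M\geq k$ from \cref{lem:explicit-hashes} ensures the $M$-constraint dominates up to constants, so $q=O(\sqrt M+\sqrt{\Delta'})=O(\sqrt M)$ and everything carries through.

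For \CONGEST, $M<2N=2C\leq n^{O(1)}$, so messages are $O(\log n)$ bits. Hash values are computed locally in $\operatorname{poly}(\log C)$ time and are never transmitted.

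The main obstacle is the $\beta=0$ case: one must show that $\Delta'$ is small relative to $F$ so that the quadratic number of auxiliary classes fits within $\sqrt{\Delta'}F^{3/2}$. I expect the inequality chain above to close it, but the constants relating $q$ to $M$ and $\Delta'$ need checking.
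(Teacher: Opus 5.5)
Your proposal is correct and follows the paper's proof essentially step for step. For $\beta\geq1$ you invoke \cref{lem:hash-stage} with $R\leq F$ and $M\leq\Delta'F$, where $F=F_a(2C/\Delta')$. For $\beta=0$ you combine $\Delta'<16q$ with $q^2=O(M+\Delta')=O(\Delta'F)$ to get $\Delta'=O(F)$, so the $O((\Delta')^2)$-round class-by-class finish fits within $O(\sqrt{\Delta'}\,F^{3/2})$, and the constants you flagged do close as you indicate.
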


\begin{proof}
Write $\Phi:=F_a(2C/\Delta')$ for this proof. Since
$k=\Delta'/2$ and
$k\leq M\leq2k\Phi$, the field choice gives
\begin{equation}
\label{eq:explicit-field-size}
    q=O(\sqrt{\Delta'\Phi}).
\end{equation}
If $\beta\geq1$, \cref{lem:hash-stage} gives the claimed degree
reduction and a proper extension of the coloring in
\[
    O(R\sqrt M+\log^*\Delta)
        =O\bigl(\sqrt{\Delta'}\,\Phi^{3/2}+\log^*\Delta\bigr)
\]
rounds, since $R\leq\Phi$ and $M\leq\Delta'\Phi$.

It remains to analyze the finishing procedure. Here $\beta=0$, so
$\Delta'<16q$.
Together with \eqref{eq:explicit-field-size}, this implies
$\Phi=\Omega(\Delta')$. Equivalently, there is an absolute constant
$K>0$ such that $\Delta'\leq K\Phi$. Processing the independent
auxiliary color classes takes $O((\Delta')^2)$ rounds, and
\[
    \begin{aligned}
        (\Delta')^2=\sqrt{\Delta'}\,(\Delta')^{3/2}\leq K^{3/2}\sqrt{\Delta'}\,\Phi^{3/2}=O\bigl(\sqrt{\Delta'}\,\Phi^{3/2}\bigr).
    \end{aligned}
\]
Every node has an available actual color because its residual list
remains nonempty. Thus this procedure also extends the proper partial
coloring within the claimed bound.

Since $M<2C$, the \CONGEST guarantee of \cref{lem:hash-stage}
applies when $C\leq n^{O(1)}$. The finishing procedure sends only
auxiliary labels and actual colors, so it satisfies the same message bound.
\end{proof}

\paragraph{The complete algorithm.}
Compute and retain the initial proper auxiliary
$O(\Delta^2)$-coloring by \cref{thm:linial}. Let
$\Delta'_0$ be the least power of two at least $\Delta$, and
set $\Delta'=\Delta'_0$. Repeat the stage above,
replacing $\Delta'$ by $\Delta'/2$ after each stage,
until the finishing procedure is used. For $\Delta'=2$, we
have $\beta=0$, so the algorithm always terminates.

\begin{proof}[Proof of \cref{thm:explicit-coloring}]
Initially, the maximum degree is at most
$\Delta'_0<2\Delta$. In every stage,
$1\leq k=\Delta'/2\leq\Delta'_0/2\leq\Delta<C$,
so \cref{lem:explicit-stage} applies. Each stage extends the proper
partial coloring and either halves the degree bound or colors all
remaining nodes. Since the algorithm terminates, this proves correctness.

At stage $i$, the degree bound is
$\Delta'_i=\Delta'_0/2^i$. Thus,
$2C/\Delta'_i\leq2^{i+1}C/\Delta$ and
$\Delta'_i<2\Delta\,2^{-i}$.
By \cref{lem:explicit-stage}, the degree-dependent cost is at most
\begin{equation}
\label{eq:explicit-series}
    O(\sqrt\Delta)\sum_{i\geq0}
        2^{-i/2}F_a(2^{i+1}C/\Delta)^{3/2}.
\end{equation}
This includes the finishing procedure. 
Set $t=C/\Delta\geq1$ and $\ell=\log(4+t)$. For every $i\geq0$,
\[
    \log\log(4+2^{i+1}t)
        \leq\log(\ell+i+1)
        \leq\log\ell+\log(i+2).
\]
Using $(x+y)^3\leq4(x^3+y^3)$ for nonnegative $x,y$, and
absorbing the ceiling in \eqref{eq:explicit-overhead}, we obtain
\[
    F_a(2^{i+1}t)^{3/2}
        \leq 2^{O((\log\log(4+t))^3)}
             \cdot2^{O((\log(i+2))^3)}.
\]
The constants in both exponents are absolute. Because
$(\log(i+2))^3=o(i)$, the series
\[
    \sum_{i\geq0}2^{-i/2}\,2^{O((\log(i+2))^3)}
\]
converges to an absolute constant. Substituting into
\eqref{eq:explicit-series} proves the claimed degree-dependent
bound. The auxiliary color reductions contribute
$O(\log\Delta\,\log^*\Delta)=O(\sqrt\Delta)$ rounds.
Adding the initial $O(\log^* n)$ rounds proves the claimed complexity.
The hash evaluation and message-size bounds follow from
\cref{lem:explicit-hashes,lem:explicit-stage}. All choices are
deterministic, and every node obtains the same hash family from the
common parameters $C$ and $k$.
\end{proof}

\end{document}